\tikzset{draw half paths/.style 2 args={%
  decoration={show path construction,
    lineto code={
      \draw [#1] (\tikzinputsegmentfirst) -- 
         ($(\tikzinputsegmentfirst)!0.5!(\tikzinputsegmentlast)$);
      \draw [#2] ($(\tikzinputsegmentfirst)!0.5!(\tikzinputsegmentlast)$)
        -- (\tikzinputsegmentlast);
    }
  }, decorate
}}
\newcommand{\algnin}{\hspace*{-\algorithmicindent}}
\tikzstyle{vertex} = [circle, draw=black, fill=black, scale= 0.5]
\tikzstyle{edgelabel} = [rectangle, fill=white]
\newtheorem{claim}{Claim}
\newtheorem{ex}[theorem]{Example}
\newtheorem{cor}[claim]{Corollary}
\newtheorem{defi}[theorem]{Definition}
\definecolor{MyColor}{RGB}{197,0,205}
\definecolor{MyPurple}{RGB}{197,0,205}
\pgfmathsetmacro{\d}{2}
\pgfmathsetmacro{\b}{2}
\newcommand{\I}{\mathcal{I}}
\newcommand{\df}{\delta}
\newcommand{\Ord}[1]{\mathcal{O}\left(#1\right)}
\newcommand{\Om}[1]{\Omega\left(#1\right)}
\title{Pairwise preferences in the stable marriage problem}
\author{\'{A}gnes Cseh}{Institute of Economics, Centre for Economic and Regional Studies, Hungarian Academy of Sciences, 1097 Budapest, T\'{o}th K\'{a}lm\'{a}n u. 4., Hungary}{cseh.agnes@krtk.mta.hu}{
}{}
\author{Attila Juhos}{Department of Computer Science and Information Theory, Budapest University of Technology and Economics, 1117 Budapest, Magyar Tud\'{o}sok krt. 2., Hungary}{juhosattila@cs.bme.hu}{}{}
\authorrunning{\'{A}. Cseh and A. Juhos}
\subjclass{Theory of computation $\rightarrow$ Design and analysis of algorithms $\rightarrow$ Graph algorithms analysis}
\keywords{stable marriage, intransitivity, acyclic preferences, poset, weakly stable matching, strongly stable matching, super stable matching}
\begin{document}
\pagenumbering{gobble}
\clearpage
\thispagestyle{empty}

\maketitle

\begin{abstract}
We study the classical, two-sided stable marriage problem under pairwise preferences. In the most general setting, agents are allowed to express their preferences as comparisons of any two of their edges and they also have the right to declare a draw or even withdraw from such a comparison. This freedom is then gradually restricted as we specify six stages of orderedness in the preferences, ending with the classical case of strictly ordered lists. We study all cases occurring when combining the three known notions of stability---weak, strong and super-stability---under the assumption that each side of the bipartite market obtains one of the six degrees of orderedness. By designing three polynomial algorithms and two $\NP$-completeness proofs we determine the complexity of all cases not yet known, and thus give an exact boundary in terms of preference structure between tractable and intractable cases.
\end{abstract}
 
\newpage
\clearpage
\pagenumbering{arabic}
\section{Introduction}

In the 2016 USA Presidential Elections, polls unequivocally reported Democratic presidential nominee Bernie Sanders to be more popular than Republican candidate Donald Trump~\cite{WWWHuffPost,WWWUSelection}. However, Sanders was beaten by Clinton in their own party's primary election cycle, thus the 2016 Democratic National Convention endorsed Hillary Clinton to be the Democrat's candidate. In the Presidential Elections, Trump defeated Clinton. This recent example demonstrates well how inconsistent pairwise preferences can be.

Preferences play an essential role in the stable marriage problem and its extensions. In the classical setting~\cite{GS62}, each man and woman expresses their preferences on the members of the opposite gender by providing a strictly ordered list. A set of marriages is stable if no pair of agents blocks it. A man and woman form a blocking pair if they mutually prefer one another to their respective spouses.

Requiring strict preference orders in the stable marriage problem is a strong assumption, which rarely suits real world scenarios~\cite{Bir17}. The study of less restrictive preference structures has been flourishing~\cite{ABFGHMR17,FGK16, Irv94,IMS03,KMMP07, Man02} for decades. As soon as one allows for ties in preference lists, the definition of a blocking edge needs to be revisited. In the literature, three intuitive definitions are used, each  of which defines weakly, strongly and super stable matchings. According to weak stability, a matching is blocked by an edge $uw$ if agents $u$ and $w$ both strictly prefer one another to their partners in the matching. A strongly blocking edge is preferred strictly by one end vertex, whereas it is not strictly worse than the matching edge at the other end vertex. A blocking edge is at least as good as the matching edge for both end vertices in the super stable case. Super stable matchings are strongly stable and strongly stable matchings are weakly stable by definition. 

Weak stability is an intuitive notion that is most aligned with the classical blocking edge definition in the model defined by Gale and Shapley~\cite{GS62}. However, reaching strong stability is the goal to achieve in many applications, such as college admission programs. In most countries, students need to submit a strict ordering in the application procedure, but colleges are not able to rank all applicants strictly, hence large ties occur in their lists. According to the equal treatment policy used in Chile and Hungary for example, it may not occur that a student is rejected from a college preferred by him, even though other students with the same score are admitted~\cite{BK15, RLPC14}. Other countries, such as Ireland~\cite{Chen2012mip}, break ties with lottery, which gives way to a weakly stable solution. Super stable matchings are admittedly less relevant in applications, however, they represent worst-case scenarios if uncertain information is given about the agents' preferences. If two edges are incomparable to each other due to incomplete information derived from the agent, then it is exactly the notion of a super stable matching that guarantees stability, no matter what the agent's true preferences are.

The goal of our present work is to investigate the three cases of stability in the presence of more general preference structures than ties.

\subsection{Related work}

The study of cyclic and intransitive preferences has been triggering scientists from a wide range of fields for decades. Blavatsky~\cite{Bla03} demonstrated that in choice situations under risk, the overwhelming majority of individuals expresses intransitive choice and violation of standard consistency requirements. Humphrey~\cite{Hum01} found that cyclic preferences persist even when the choice triple is repeated for the second time. Using MRI scanners, neuroscientists identified brain regions encoding `local desirability', which led to clear, systematic and predictable intransitive choices of the participants of the experiment~\cite{KTHDP10}.

Cyclic and intransitive preferences occur naturally in multi-attribute comparisons~\cite{Fis99,May54}. May~\cite{May54} studied the choice on a prospective partner and found that a significant portion of the participants expressed the same cyclic preference relations if candidates lacking exactly one of the three properties intelligence, looks, and wealth were offered at pairwise comparisons. Cyclic and intransitive preferences also often emerge in the broad topic of voting and representation, if the set of voters differs for some pairwise comparisons~\cite{AR96}, such as in our earlier example with the polls on the Clinton--Sanders--Trump battle. Preference aggregation is another field that often yields intransitive group preferences, as the famous Condorcet-paradox~\cite{Con85} also states. In this paper, we investigate the stable marriage problem equipped with these ubiquitous and well-studied preference structures.

Regarding the stable marriage problem, all three notions of stability have been thoroughly investigated if preferences are given in the form of a partially ordered set, a list with ties or a strict list~\cite{GS62, Irv94,IMS03,KMMP07,Man02,Man13}. Weakly stable matchings always exist and can be found in polynomial time~\cite{Man02}, and a super stable matching or a proof for its non-existence can also be produced in polynomial time~\cite{Irv94,Man13}. The most sophisticated ideas are needed in the case of strong stability, which turned out to be solvable in polynomial time if both sides have tied preferences~\cite{Irv94}. Irving~\cite{Irv94} remarked that ``Algorithms that we have described can easily be extended to the more general problem in which each person's preferences are expressed as a partial order. This merely involves interpreting the `head' of each person's (current) poset as the set of source nodes, and the `tail' as the set of sink nodes, in the corresponding directed acyclic graph.'' Together with his coauthors, he refuted this statement for strongly stable matchings and shows that exchanging ties for posets actually makes the strongly stable marriage problem $\NP$-complete~\cite{IMS03}. We show it in this paper that the intermediate case, namely when one side has ties preferences, while the other side has posets, is solvable in polynomial time.

Beyond posets, studies on the stable marriage problem with general preferences occur sporadically. These we include in Table~\ref{ta:all} to give a structured overview on them. Intransitive, acyclic preference lists were permitted by Abraham~\cite{Abr03}, who connects the stable roommates problem with the maximum size weakly stable marriage problem with intransitive, acyclic preference lists in order to derive a structural perspective. Aziz et al.~\cite{ABFGHMR17} discussed the stable marriage problem under uncertain pairwise preferences. They also considered the case of certain, but cyclic preferences and show that deciding whether a weakly stable matching exists is $\NP$-complete if both sides can have cycles in their preferences. Strongly and super stable matchings were discussed by Farczadi et al.~\cite{FGK16}. Throughout their paper they assumed that one side has strict preferences, and show that finding a strongly or a super stable matching (or proving that none exists) can be done polynomial time if the other side has cyclic lists, where cycles of length at least~3 are permitted to occur, but the problems become $\NP$-complete as soon as cycles of length~2 are also allowed.

\begin{table}[htbp]
	\centering
      \resizebox{\textwidth}{!}{
		\begin{tabular}{|l|c c c c c|}
		\hline
			WEAK& strict & ties & poset & acyclic & asymmetric or arbitrary\\ \hline
            strict & $\mathcal{O}(m)$~\cite{GS62} & $\mathcal{O}(m)$~\cite{Irv94}& $\mathcal{O}(m)$~\cite{Man02} & \cellcolor{MyColor!25}$\mathcal{O}(m)$ & \cellcolor{MyColor!25}\textbf{$\NP$}\\
            ties & \phantom{n} & $\mathcal{O}(m)$~\cite{Irv94}& $\mathcal{O}(m)$~\cite{Man02} & \cellcolor{MyColor!25}$\mathcal{O}(m)$ & \cellcolor{MyColor!25}\textbf{$\NP$}\\
            poset & \phantom{n} & \phantom{n} & $\mathcal{O}(m)$~\cite{Man02} & \cellcolor{MyColor!25}$\mathcal{O}(m)$ & \cellcolor{MyColor!25}\textbf{$\NP$}\\
            acyclic & \phantom{n} & \phantom{n} & \phantom{n} & \cellcolor{MyColor!25}$\mathcal{O}(m)$ & \cellcolor{MyColor!25}\textbf{$\NP$}\\
            asymmetric or arbitrary & \phantom{n} & \phantom{n} & \phantom{n} & \phantom{n} & $\NP$~\cite{ABFGHMR17}\\
		\hline
		\end{tabular}
        }
        \vspace{5mm}
        
       \resizebox{\textwidth}{!}{
		\begin{tabular}{|l|c c c c c c|}
		\hline
			STRONG & strict & ties & poset & acyclic & asymmetric & arbitrary\\ \hline
             strict & $\mathcal{O}(m)$~\cite{GS62} & $\mathcal{O}(nm)$~\cite{Irv94,KMMP07} & pol~\cite{FGK16} & pol~\cite{FGK16} & pol~\cite{FGK16} & $\NP$~\cite{FGK16}\\
            ties & \phantom{n} & $\mathcal{O}(nm)$~\cite{Irv94,KMMP07} & \cellcolor{MyColor!25}$\Ord{mn^2 + m^2}$ & \cellcolor{MyColor!25}$\Ord{mn^2 + m^2}$ &  \cellcolor{MyColor!25}$\Ord{mn^2 + m^2}$ &  $\NP$~\cite{FGK16}\\
            poset & \phantom{n} & \phantom{n} & $\NP$~\cite{IMS03} & $\NP$~\cite{IMS03} & $\NP$~\cite{IMS03} & $\NP$~\cite{IMS03}\\
            acyclic & \phantom{n} & \phantom{n} & \phantom{n} & $\NP$~\cite{IMS03} & $\NP$~\cite{IMS03} & $\NP$~\cite{IMS03}\\
            asymmetric & \phantom{n} & \phantom{n} & \phantom{n} & \phantom{n} & $\NP$~\cite{IMS03} & $\NP$~\cite{IMS03}\\
            arbitrary & \phantom{n} & \phantom{n} & \phantom{n} & \phantom{n} & \phantom{n} & $\NP$~\cite{IMS03}\\
		\hline
		\end{tabular}
        } 
        \vspace{5mm}
        
        \resizebox{1\textwidth}{!}{
		\begin{tabular}{|l|c c c c c c|}
		\hline
			SUPER & strict & ties & poset & acyclic & asymm. & arbitrary\\ \hline
            strict & $\mathcal{O}(m)$~\cite{GS62} & $\mathcal{O}(m)$~\cite{Irv94} & $\mathcal{O}(m)$~\cite{Irv94, Man13} & $\mathcal{O}(m)$~\cite{FGK16} & $\mathcal{O}(m)$~\cite{FGK16} & $\NP$~\cite{FGK16}\\
            ties & \phantom{n}  & $\mathcal{O}(m)$~\cite{Irv94} & $\mathcal{O}(m)$~\cite{Irv94, Man13} & \cellcolor{MyColor!25}\textbf{$\Ord{n^2m}$} & \cellcolor{MyColor!25}\textbf{$\Ord{n^2m}$} & $\NP$~\cite{FGK16}\\
            poset & \phantom{n} & \phantom{n} & $\mathcal{O}(m)$~\cite{Irv94, Man13} & \cellcolor{MyColor!25}\textbf{$\Ord{n^2m}$} & \cellcolor{MyColor!25}\textbf{$\Ord{n^2m}$} & $\NP$~\cite{FGK16}\\
            acyclic & \phantom{n} & \phantom{n} & \phantom{n} & \cellcolor{MyColor!25} $\NP$  & \cellcolor{MyColor!25} $\NP$ & $\NP$~\cite{FGK16}\\
            asymmetric & \phantom{n} & \phantom{n} & \phantom{n} & \phantom{n} & \cellcolor{MyColor!25}\textbf{$\NP$} & $\NP$~\cite{FGK16}\\
            arbitrary & \phantom{n} & \phantom{n} & \phantom{n} & \phantom{n} & \phantom{n} & $\NP$~\cite{FGK16}\\
		\hline
		\end{tabular}
       }
        
\caption{The complexity tables for weak, strong and super-stability.}
\label{ta:all}
\end{table}

\subsection{Our contribution}

This paper aims to provide a coherent framework for the complexity of the stable marriage problem under various preference structures. We consider the three known notions of stability: weak, strong and super. In our analysis we distinguish six stages of entropy in the preference lists; strict lists, lists with ties, posets, acyclic pairwise preferences, asymmetric pairwise preferences and arbitrary pairwise preferences. All of these have been defined in earlier papers, along with some results on them. Here we collect and organize these known results in all three notions of stability, considering six cases of orderedness for each side of the bipartite graph. Table~\ref{ta:all} summarizes these results. 

Each of the three tables contained empty cells, this is, cases with unknown complexity so far. These are denoted by color in Table~\ref{ta:all}. We fill all gaps, providing two $\NP$-completeness proofs and three polynomial time algorithms. Interestingly, the three tables have the border between polynomial time and $\NP$-complete cases at very different places.

\textbf{Structure of the paper.} We define the problem variants formally in Section~\ref{se:prel}. Weak, strong and super stable matchings are then discussed in Sections~\ref{se:weak}, \ref{se:strong} and \ref{se:super}, respectively.

\section{Preliminaries}
\label{se:prel}

In the stable marriage problem, we are given a not necessarily complete bipartite graph $G = (U \cup W, E)$, where vertices in $U$ represent men, vertices in $W$ represent women, and edges mark the acceptable relationships between them. Each person $v \in U \cup W$ specifies a set $\mathcal{R}_v$ of pairwise comparisons on the vertices adjacent to them. These comparisons as ordered pairs define four possible relations between two vertices $a$ and $b$ in the neighborhood of~$v$.
\begin{itemize}
\item $a$ is preferred to $b$, while $b$ is not preferred to $a$ by $v$: $a \prec_v b$;
\item $a$ is not preferred to $b$, while $b$ is preferred to $a$ by $v$: $a \succ_v b$;
\item $a$ is not preferred to $b$, neither is $b$ preferred to $a$ by $v$: $a \sim_v b$;
\item $a$ is preferred to $b$, so is $b$ preferred to $a$ by $v$: $a ||_v b$.
\end{itemize}

In words, the first two relationships express that an agent $v$ \emph{prefers} one agent \emph{strictly} to the other. The third option is interpreted as \emph{incomparability}, or a not yet known relation between the two agents. The last relation tells that $v$ knows for sure that the two options are \emph{equally good}. For example, if $v$ is a sports sponsor considering to offer a contract to exactly one of players $a$ and $b$, then $v$'s preferences are described by these four relations in the following scenarios: $a$ beats $b$, $b$ beats $a$, $a$ and $b$ have not played against each other yet, and finally, $a$ and $b$ played a draw.

We say that edge $va$ \emph{dominates} edge $vb$ if $a \prec_v b$. If $a \prec_v b$ or $a \sim_v b$, then $b$ is \emph{not preferred to} $a$. The partner of vertex $v$ in matching $M$ is denoted by~$M(v)$. The neighborhood  of $v$ in graph $G$ is denoted by $\mathcal{N}_G(v)$ and it consists of all vertices that are adjacent to $v$ in~$G$. To ease notation, we introduce the empty set as a possible partner to each vertex, symbolizing the vertex remaining unmatched in a matching~$M$ ($M(v)=\emptyset$). As usual, being matched to any acceptable vertex is preferred to not being matched at all: $a \prec_v \emptyset$ for every $a \in \mathcal{N}(v)$. Edges to unacceptable partners do not exist, thus these are not in any pairwise relation to each other or to edges incident to~$v$.

We differentiate six degrees of preference orderedness in our study.
\begin{enumerate}
\item The strictest, classical two-sided model~\cite{GS62} requires each vertex to rank all of its neighbors in a \emph{strict} order of preference. For each vertex, this translates to a transitive and complete set of pairwise relations on all adjacent vertices.
\item This model has been relaxed very early to lists admitting \emph{ties}~\cite{Irv94}. The pairwise preferences of vertex $v$ form a preference list with ties if the neighbors of $v$ can be clustered into some sets $N_1, N_2, \dots, N_k$ so that vertices in the same set are incomparable, while for any two vertices in different sets, the vertex in the set with the lower index is strictly preferred to the other one.
\item Following the traditions~\cite{FIM07,IM02,IMS03,Man02}, the third degree of orderedness we define is when preferences are expressed as \emph{posets}. Any set of antisymmetric and transitive pairwise preferences by definition forms a partially ordered set.
\item By dropping transitivity but still keeping the structure cycle-free, we arrive to \emph{acyclic} preferences~\cite{Abr03}. This category allows for example $a \sim_v c$ , if $a \prec_v b \prec_v c$, but it excludes $a ||_v c$ and $a \succ_v c$.
\item \emph{Asymmetric} preferences~\cite{FGK16} may contain cycles of length at least~3. This is equivalent to dropping acyclicity from the previous cluster, but still prohibiting the indifference relation $a||_v b$, which is essentially a 2-cycle in the form $a$ is preferred to $b$, and $b$ is preferred to~$a$.
\item Finally, an \emph{arbitrary} set of pairwise preferences can also be allowed~\cite{ABFGHMR17,FGK16}.
\end{enumerate}

A matching is \emph{stable} if it admits no blocking edge. For strict preferences, a blocking edge was defined in the seminal paper of Gale and Shapley~\cite{GS62}: an edge $uv \notin M$ blocks matching $M$ if both $u$ and $v$ prefer each other to their partner in $M$ or they are unmatched. Already when extending this notion to preference lists with ties, one needs to specify how to deal with incomparability. Irving~\cite{Irv94} defined three notions of stability. We extend them to pairwise preferences in the coming three sections. We omit the adjectives weakly, strongly, and super wherever there is no ambiguity about the type of stability in question. All missing proofs can be found in the \hyperlink{app:appendix}{Appendix}.

\section{Weak stability}

In weak stability, an edge outside of $M$ blocks $M$ if it is \emph{strictly preferred} to the matching edge by \emph{both} of its end vertices. From this definition follows that $w ||_u w'$ and $w \sim_u w'$ are exchangeable in weak stability, because blocking occurs only if the non-matching edge dominates the matching edges at both end vertices. Therefore, an instance with arbitrary pairwise preferences can be assumed to be asymmetric.

\label{se:weak}
\begin{defi}[blocking edge for weak stability]
Edge $uw$ blocks $M$, if 
\begin{enumerate}
	\item $uw \notin M$;
	\item $w \prec_u M(u)$;
	\item $u \prec_w M(w)$.
\end{enumerate}
\end{defi}

For weak stability, preference structures up to posets have been investigated, see Table~\ref{ta:all}. A stable solution is guaranteed to exist in these cases~\cite{Irv94,Man02}. Here we extend this result to acyclic lists, and complement it with a hardness proof for all cases where asymmetric lists appear, even if they do so on one side only.

\begin{theorem}
\label{th:smsas_pol}
Any instance of the stable marriage problem with acyclic pairwise preferences for all vertices admits a weakly stable matching, and there is a polynomial time algorithm to determine such a matching.
\end{theorem}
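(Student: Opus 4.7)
The plan is to reduce the acyclic case to the poset case, which is already known to be solvable in polynomial time by Manlove~\cite{Man02}. Given an instance with acyclic pairwise preferences, for each vertex $v$ I would replace $\prec_v$ by its transitive closure $\prec'_v$, obtained by adding $a \prec'_v b$ whenever there is a chain $a \prec_v x_1 \prec_v \dots \prec_v x_k \prec_v b$ in the original relation. Because $\prec_v$ is cycle-free, the closure can never force both $a \prec'_v b$ and $b \prec'_v a$, so $\prec'_v$ is antisymmetric and transitive, i.e.\ a partial order on $\mathcal{N}(v)$. The closure is computable in polynomial time per vertex by a standard reachability computation on the DAG representing $\prec_v$.

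The second step is to verify that any weakly stable matching $M$ in the modified instance is also weakly stable in the original. Suppose for contradiction that $uw$ blocks $M$ under the original preferences: then $w \prec_u M(u)$ and $u \prec_w M(w)$. Since $\prec_v \subseteq \prec'_v$ for every $v$, the same strict relations hold under the closures, so $uw$ would also block $M$ in the poset instance, contradicting weak stability there. Hence weak stability in the closure implies weak stability in the original. Applying Manlove's polynomial-time algorithm for posets~\cite{Man02} to $\{\prec'_v\}_{v \in U \cup W}$ then both guarantees existence and yields a weakly stable matching in polynomial time.

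The only conceptual subtlety — and the step I would check most carefully — is that transitive closure is genuinely safe for weak stability. The point is that weak blocking is determined purely by strict preference on both endpoints, so enlarging the strict-preference relation can only enlarge the set of blocking edges, making ``no blocking edge'' a monotone property under refinement. This argument would break for strong or super stability, where blocking also consults the incomparability relation $\sim$ and $||$, and taking the transitive closure shrinks those sets; consistent with Table~\ref{ta:all}, the corresponding strong and super-stable cases are in fact harder. For weak stability, however, the reduction is clean and immediately gives the claimed result.
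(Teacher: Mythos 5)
Your proof is correct and follows essentially the same route as the paper: both rest on the observation that enlarging the strict part of a cycle-free preference relation can only add weakly blocking edges, so a weakly stable matching of any extension is weakly stable in the original instance. The only difference is that you stop at the transitive closure and invoke the known poset algorithm as a black box, whereas the paper extends each relation all the way to a linear order and runs Gale--Shapley directly on the extended instance, which is what yields the $\mathcal{O}(m)$ bound reported in Table~\ref{ta:all}.
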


\begin{proof}
    We utilize a widely used argument~\cite{Irv94} to show this. For acyclic relations $\mathcal{R}_v$, a linear extension $\mathcal{R}'_v$ of $\mathcal{R}_v$ exists. The extended instance with linear preferences is guaranteed to admit a stable matching~\cite{GS62}. Compared to $\mathcal{R}$, relations in $\mathcal{R}'_v$ impose more constraints on stability, therefore, they can only restrict the original set of weakly stable solutions. If both sides have acyclic lists, a stable matching is thus guaranteed to exist and a single run of the Gale-Shapley algorithm on the extended instance delivers one.
\end{proof}

Stable matchings are not guaranteed to exist as soon as a cycle appears in the preferences, as Example~\ref{ex:weak_gadget} demonstrates. Theorem~\ref{th:weak_np} shows that the decision problem is in fact hard from that point on.

\begin{ex}
\label{ex:weak_gadget}
No stable matching can be found in the following instance with strict lists on one side and asymmetric lists on the other side. There are three men $u_1, u_2, u_3$ adjacent to one woman $w$. The woman's pairwise preferences are cyclic: $u_1 \prec u_2, u_2 \prec u_3, u_3\prec u_1$. Any stable matching $M$ must consist of a single edge. Since the men's preferences are identical, we can assume that $u_1w \in M$ without loss of generality. Then $u_3w$ blocks~$M$.
\end{ex}

\begin{restatable}{theorem}{thweaknp}
\label{th:weak_np}
If one side has strict lists, while the other side has asymmetric pairwise preferences, then determining whether a weakly stable matching exists is $\NP$-complete, even if each agent finds at most four other agents acceptable.
\end{restatable}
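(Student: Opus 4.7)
Membership in $\NP$ is immediate: a matching is a polynomial-size certificate, and checking whether any of the $\Ord{m}$ non-matching edges blocks it can be done in linear time since the blocking condition for weak stability only refers to the strict preferences at both endpoints. The work is therefore in proving $\NP$-hardness, and the plan is to reduce from a bounded-occurrence variant of \textsc{3-Sat} in which each variable appears in at most, say, three clauses (this restriction is well-known to keep \textsc{3-Sat} hard and will help us enforce the degree-four bound). Given a formula $\varphi$ with variables $x_1,\dots,x_n$ and clauses $C_1,\dots,C_m$, the goal is to build an instance of the stable marriage problem, in which women have strict lists and men have asymmetric pairwise preferences, that admits a weakly stable matching if and only if $\varphi$ is satisfiable.

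The heart of the construction is a \emph{clause gadget} that is a direct generalisation of Example~\ref{ex:weak_gadget}. For each clause $C_j=(\ell_{j,1}\vee\ell_{j,2}\vee\ell_{j,3})$ I would introduce three men $u_{j,1},u_{j,2},u_{j,3}$ and a single woman $w_j$ adjacent to all of them, with $w_j$ equipped with the same strict list (say $u_{j,1}\succ u_{j,2}\succ u_{j,3}$, which is strict and hence asymmetric from her side after swapping roles), while the three men share a cyclic asymmetric preference $u_{j,1}\prec_{w_j} u_{j,2}\prec_{w_j}u_{j,3}\prec_{w_j}u_{j,1}$ on the woman-side of the gadget (mirroring Example~\ref{ex:weak_gadget}, with the strict side being the three-man side). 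By the argument in Example~\ref{ex:weak_gadget}, the gadget alone has no weakly stable matching; stability can only be achieved by letting at least one $u_{j,k}$ leave the gadget, i.e.\ be matched outside to a designated ``literal-true'' partner. Each clause gadget thus encodes the disjunction.

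The \emph{variable gadget} for $x_i$ should admit exactly two weakly stable internal matchings, $M_T^i$ and $M_F^i$, and expose, to each of the at-most-three occurrences of $x_i$ in $\varphi$, a single external edge that is present in a stable matching of the whole instance precisely when the corresponding literal evaluates to true. The natural way to achieve this is an even cycle of alternating men and women with ties arranged so that the two perfect matchings of the cycle correspond to the two truth values; a special ``contact'' vertex is placed at the position of each occurrence of $x_i$ (as positive or negative literal) and is connected by one additional edge to the clause vertex $u_{j,k}$ corresponding to that occurrence. Preferences on the contact edges are tuned so that, whenever the gadget is in the $M_T^i$ state, every $u_{j,k}$ for a true literal of $x_i$ may (and in fact must, to avoid blocking) be matched to its variable-gadget neighbour; whereas in the $M_F^i$ state the analogous statement holds for the negative literals. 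Bounded occurrences guarantee that each agent has at most four acceptable partners.

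Finally, I would prove correctness in two directions. Given a satisfying assignment of $\varphi$, set each variable gadget to the matching that corresponds to its assigned truth value and let exactly one satisfied literal's contact edge be matched inside each clause; a direct case check (using strictness on one side of every gadget) shows that no blocking edge remains. Conversely, given any weakly stable matching, observe that each variable gadget must be in one of its two ``consistent'' states, because any mixture would create a blocking edge inside the even cycle, and each clause gadget forces at least one of its three men to be matched externally, which by construction corresponds to a true literal; reading off the truth values yields a satisfying assignment. The delicate part of the whole argument, and the one I would expect to take the most care, is tuning the asymmetric preferences on the contact edges so that the two obligations ``variable gadget stays consistent'' and ``satisfied clause can release one of its three men'' can be met simultaneously without opening a new blocking pair anywhere in the instance, all while respecting the cap of four acceptable partners per agent.
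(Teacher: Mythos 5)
Your high-level template (variable gadgets with two truth states, clause gadgets built around the cyclic three-men-one-woman obstruction of Example~\ref{ex:weak_gadget}, interconnecting edges, two-directional correctness) is the right shape, and your clause gadget is a plausible mechanism. However, there are two genuine gaps. First, the source problem is wrong: a CNF formula in which every clause has \emph{exactly} three literals and every variable occurs in at most three clauses is \emph{always} satisfiable (this is Tovey's classical observation), so the reduction as stated starts from a trivially decidable problem. You must either allow clauses of size two or, as the paper does, reduce from \textsc{(2,2)-e3-sat} (each variable occurring exactly twice unnegated and twice negated), which is also exactly what yields the degree-four bound.

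Second, the variable gadget --- which you yourself identify as ``the delicate part'' --- is not actually constructed, and the sketch you give is internally inconsistent. If the two truth states are the two \emph{perfect} matchings of an even cycle, then every contact vertex is covered internally in both states, so no clause man can ever be matched to his contact vertex; yet your clause gadget is stabilised precisely by letting a man leave to his contact vertex. Moreover, ``ties arranged'' on an alternating cycle does not by itself restrict the weakly stable configurations to the two perfect matchings (maximal non-perfect matchings of a tied even cycle can be weakly stable), and placing ties on the men's side would violate the strict-list hypothesis. The paper resolves all of this differently: its variable gadget is a $4$-cycle $t,\bar x,f,x$ with strict lists at $t,f$ and cyclic asymmetric preferences at $x,\bar x$, the key lemma being that $t$ and $f$ must be covered in any stable matching, so the gadget is forced into one of the two states $\{tx,f\bar x\}$ or $\{fx,t\bar x\}$ and \emph{no interconnecting edge is ever matched}; the clause gadget is a $K_{3,3}$ with indifferent women in which exactly one interconnecting edge dominates at the clause side and must therefore be dominated at the variable side. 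Until you specify concrete preferences for a variable gadget with the two-state property and verify that the contact edges neither block nor need to be matched, the hardness argument is not complete.
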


\section{Strong stability}
\label{se:strong}

In strong stability, an edge outside of $M$ blocks $M$ if it is \emph{strictly preferred} to the matching edge by \emph{one} of its end vertices, while the other end vertex \emph{does not prefer} its matching edge to it.

\begin{defi}[blocking edge for strong stability]
Edge $uw$ blocks $M$, if\\ 
\begin{minipage}{0.4\textwidth}
\begin{enumerate}
	\item $uw \notin M$;
	\item $w \prec_u M(u)$ or $ w \sim_u M(u)$;
	\item $u \prec_w M(w)$,
\end{enumerate}
\end{minipage}
or \hspace{10mm}
\begin{minipage}{0.4\textwidth}
\begin{enumerate}
	\item $uw \notin M$;
	\item $w \prec_u M(u)$;
	\item $u \prec_w M(w)$ or $u \sim_w M(w)$.
\end{enumerate}
\end{minipage}
\end{defi}

The largest set of relevant publications has appeared on strong stability, yet gaps were present in the complexity table, see Table~\ref{ta:all}. In this section we present a polynomial algorithm that is valid in all cases not solved yet. We assume men to have preference lists with ties, and women to have asymmetric relations. Our algorithm returns a strongly stable matching or a proof for its nonexistence. It can be seen as an extended version of Irving's algorithm for strongly stable matchings in instances with ties on both sides~\cite{Irv94}. Our contribution is a sophisticated rejection routine, which is necessary here, because of the intransitivity of preferences on the women's side. The algorithm in~\cite{FGK16} solves the problem for strict lists on the men's side, and it is much simpler than ours. It was designed for super stable matchings, but strong and super stability do not differ if one side has strict lists. For this reason, that algorithm is not suitable for an extension in strong stability.

Roughly speaking, our algorithm alternates between two phases, both of which iteratively eliminate edges that cannot occur in a strongly stable matching. In the first phase, Gale-Shapley proposals and rejections happen, while the second phase focuses on finding a vertex set violating the Hall condition in a specified subgraph. Finally, if no edge can be eliminated any more, then we show that an arbitrary maximum matching is either stable or it is a proof for the non-existence of stable matchings. Algorithms~\ref{alg:strong_SMTAs} and~\ref{alg:reject} below provide a pseudocode. The time complexity analysis has been shifted to the Appendix.

The second phase of the algorithm relies on the notion of the \emph{critical set} in a bipartite graph, also utilized in~\cite{Irv94}, which we sketch here. For an exhaustive description we refer the reader to~\cite{LP09}. The well-known Hall-condition~\cite{Hal35} states that there is a matching covering the entire vertex set $U$ if and only if for each $X \subseteq U$, $|\mathcal{N}(X)| \geq |X|$. Informally speaking, the reason for no matching being able to cover all the vertices in $U$ is that a subset $X$ of them has too few neighbors in $W$ to cover their needs. The difference $\df(X) = |X| - |\mathcal{N}(X)|$ is called the \emph{deficiency of $X$}. It is straightforward that for any $X \subseteq U$, at least $\df(X)$ vertices in $X$ cannot be covered by any matching in $G$, if $\df(X) > 0$. Let $\df(G)$ denote the maximum deficiency over all subsets of $U$. Since $\df(\emptyset) = 0$, we know that $\df(G) \geq 0$. Moreover, it can be shown the size of maximum matching is $\nu(G) = |U| - \df(G)$. If we let $Z_1, Z_2$ be two arbitrary subsets of $U$ realizing the maximum deficiency, then $Z_1 \cap Z_2$ has maximum deficiency as well. Therefore, the intersection of all maximum-deficiency subsets of $U$ is the unique set with maximum deficiency with the following properties: \begin{inparaenum}
\item it has the lowest number of elements and \item it is contained in all other subsets with maximum deficiency.
\end{inparaenum} This set is called the \emph{critical set} of~$G$. Last but not least, it is computationally easy to determine the critical set, since for any maximum matching $M$ in $G$, the critical set consists of vertices in $U$ not covered by $M$ and vertices in $U$ reachable from the uncovered ones via an alternating path.

\begin{theorem}
\label{th:st_pol}
If one side has tied preferences, while the other side has asymmetric pairwise preferences, then deciding whether the instance admits a strongly stable matching can be done in $\mathcal{O}(mn^2+m^2)$ time.
\end{theorem}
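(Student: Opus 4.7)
The plan is to extend Irving's algorithm~\cite{Irv94} for strongly stable matchings with ties on both sides to the present setting where women's preferences may be asymmetric. Throughout the algorithm, each man $u$ maintains a current head $H(u)$, the set of tied most-preferred women among those who have not yet rejected him; this head is well-defined because $u$'s preferences form a list with ties. Every $u$ simultaneously proposes to every $w \in H(u)$, and the algorithm operates on the bipartite proposal graph $G^\ast$ induced by these proposals. Two phases then alternate, both iteratively deleting edges from $G^\ast$ while maintaining the invariant that no edge belonging to some strongly stable matching is ever removed. When neither phase can delete further, a maximum matching in $G^\ast$ is computed and either returned as strongly stable or used as a witness for non-existence.

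Phase 1 is the pairwise rejection step: whenever a woman $w$ holds proposals from two men $u, u'$ with $u' \prec_w u$, the edge $uw$ is deleted and $u$'s head is recomputed. For safety, suppose $uw$ lay in some strongly stable $M$. Since $u'$ currently proposes to $w$, we have $w \in H(u')$, so $u'$ does not strictly prefer any remaining acceptable woman---in particular $M(u')$---to $w$. Combined with $u' \prec_w u = M(w)$, this shows $u'w$ strongly blocks $M$, a contradiction. Critically, the argument uses only the single pairwise relation $u' \prec_w u$ at $w$ and hence survives the drop from ties to asymmetric $\prec_w$.

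Phase 2 mirrors the critical-set phase of~\cite{Irv94} but requires a new rejection rule. The algorithm computes the critical set $Z \subseteq U$ in $G^\ast$; if it has positive deficiency, then every $w \in \mathcal{N}(Z)$ must, in any strongly stable completion, be matched to some proposer inside $Z$. Hence any held proposer $u \notin Z$ of $w$ is rejected whenever the pairwise asymmetric relations guarantee a strong block by some held $u^\ast \in Z$---formally, whenever some $u^\ast \in Z$ currently proposing to $w$ satisfies $u^\ast \prec_w u$. The novelty over~\cite{Irv94} is that this rejection operates edge by edge rather than by removing an entire tied tail at $w$, since with asymmetric $\prec_w$ the tail is not well-defined; safety then follows from the same one-step blocking argument as in Phase 1.

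The main obstacle is proving that the two phases together always make progress: with ties on both sides, transitivity of $\prec_w$ guarantees that whenever $Z$ has positive deficiency some $w \in \mathcal{N}(Z)$ carries an outside dominated proposer, and this step must be recovered in the asymmetric setting, possibly by showing that at a Phase~1 fixed point the domination structure at each $w$ is already restricted enough that a Phase~2 step applies. The complexity $\Ord{mn^2+m^2}$ is expected from $\Ord{m}$ individual edge deletions throughout, $\Ord{n}$ critical-set recomputations dominating the cost at $\Ord{nm}$ each for the underlying bipartite maximum matching, and a final $\Ord{m^2}$ stability check of the returned matching.
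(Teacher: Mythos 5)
Your overall architecture matches the paper's: men propose along ties, women reject strictly dominated proposals, a critical-set phase alternates with the proposal phase, and a maximum matching in the surviving graph is output or used as a non-existence certificate. Your Phase~1 safety argument is essentially the paper's first bullet in its ``no stable edge is ever rejected'' claim. However, there are two genuine gaps, and you have correctly sensed but not repaired the first one.

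First, your Phase~2 deletion rule is vacuous. You reject an edge $uw$ with $u \notin Z$ only when some $u^\ast \in Z$ currently proposing to $w$ satisfies $u^\ast \prec_w u$ --- but that is exactly the Phase~1 condition, so at a Phase~1 fixed point no such pair survives and Phase~2 deletes nothing. The algorithm then stalls whenever the critical set is nonempty, and declaring non-existence at that point is unsound. The paper's rule is different: when the critical set $U'$ is nonempty, \emph{all} active edges of every man in $U'$ are rejected, forcing those men down to their next tie; this is what guarantees progress. Its safety is not a one-step blocking argument but a deficiency argument: if a stable edge $uw$ with $u\in U'$ were rejected, one shows $\mathcal{N}_{G_A}(U'\setminus U'')\subseteq W'\setminus W''$ for the matched sub-pairs $U'',W''$, so $U'\setminus U''$ would be a strictly smaller set of maximum deficiency, contradicting criticality of $U'$. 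Second, you omit the rejection of \emph{incomparable} edges, which the paper identifies as its main new ingredient and which is precisely what intransitivity on the women's side necessitates: whenever a man $u$ is reduced to at most one active edge $uw$ (or his whole proposal tie is rejected), every edge $u'w$ with $u'\sim_w u$ must also be rejected, and these rejections must be propagated (the paper's STRONG\_REJECT routine). Without this, the final maximum matching can be blocked by an edge $uw$ with $w\sim_u M(u)$ and $u\sim_w M(w)$, since neither your Phase~1 nor your Phase~2 ever removes such an edge; your correctness claim for the output therefore fails. The stated running time also depends on charging these cascading incomparability rejections carefully (each edge is rejected at most once, each rejection primitive costs $\Ord{n^2}$), not on a post-hoc stability check.
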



\newcounter{myalgcounter}

\begin{algorithm}
	\caption{Strongly stable matching with ties and asymmetric relations}
	\textbf{Input}: $\I=(U, W, E, \mathcal{R}_{U}, \mathcal{R}_{W})$; $\mathcal{R}_{U}$: lists with ties, $\mathcal{R}_{W}$: asymmetric.
	
	\begin{algorithmic}[1] %
        
		\Statex \algnin{} \textbf{INITIALIZATION} %
        
		\State for each $u \in U$ add an extra woman $w_u$ at the end of his list; $w_u$ is only acceptable for~$u$ \label{li:dummy} 
		
        \State set all edges to be inactive \label{li:inact} \medskip
        
		\Statex \algnin{} \textbf{PHASE 1} %
		
        \While{there exists a man with no active edge} \label{li:ph1start} %
        	  \State propose along all edges of each such man $u$ in the next tie on his list \label{li:nexttie} 
            
            \For{each new proposal edge $uw$} \label{li:newactive}
            	\State reject all edges $u'w$ such that $u \prec_{w} u'$ \label{li:rejectworse} 
            \EndFor \label{li:9}
                   
			\State \label{li:ph1reject} STRONG\_REJECT() %
            
		\EndWhile \label{li:ph1end}		    \medskip
        
		\Statex \algnin{} \textbf{PHASE 2} %
		
		\State let $G_A$ be the graph of active edges with $V(G_A) = U \cup W$ \label{li:GA}%

		\State let $U' \subseteq U$ be the critical set of men with respect to $G_A$ \label{li:Hall}
        
        \If{$U' \neq \emptyset$}\label{li:u'emptyset}
        	
            \State all active edges of each $u \in U'$ are rejected \label{li:Hall_reject}
		
			\State STRONG\_REJECT() \label{li:Hall_strong_reject}
		
			\State \textbf{goto PHASE 1} \label{li:goto} %
        
        \EndIf  \medskip
        
		\Statex \algnin{} \textbf{OUTPUT} %
        
        \State let M be a maximum matching in $G_A$ \label{li:M} %
        
        \If{$M$ covers all women who have ever had an active edge} \label{li:cover}
        
			\State STOP, OUTPUT $M \cap E$ and ``There is a strongly stable matching.'' \label{li:outputM} %
        
		\Else \label{li:22}
        	
            \State STOP, OUTPUT ``There is no strongly stable matching.'' \label{li:outputNO} %
		
		\EndIf \label{li:24}
		
        \setcounter{myalgcounter}{\value{ALG@line}}
	\end{algorithmic}
	\label{alg:strong_SMTAs}
\end{algorithm}

\begin{algorithm}
	
 	\caption{STRONG\_REJECT()}
	    
	\begin{algorithmic}[1]
    
    	\setcounter{ALG@line}{\value{myalgcounter}}
    	\State let $R = U$\label{li:25}

        \While{$R \neq \emptyset$} \label{li:newset} 

            \State let $u$ be an element of $R$  \label{li:27}
            
            \If{$u$ has exactly one active edge $uw$}  \label{li:oneact}
            	\State reject all $u'w$ such that $u' \sim_{w} u$ \label{li:29}
                \State if $u'w$ was active, then let $R := R \cup \{u'\}$ \label{li:30}
                
            \ElsIf{$u$ has no active edge} \label{li:zeroact}
            	\State reject all $u'w$ such that $w$ is in the proposal tie of $u$ and $u' \sim_{w} u$\label{li:34}
                \State if $u'w$ was active, then let $R := R \cup \{u'\}$ \label{li:zeroact_r}
            \EndIf 
            
            \State let $R := R \setminus \{ u \}$ 
		\EndWhile
        
        \setcounter{myalgcounter}{\value{ALG@line}}
	\end{algorithmic}
    \label{alg:reject}
\end{algorithm}


\textbf{Initialization.} For the clarity of our proofs we add a dummy partner $w_u$ to the bottom of the list of each man $u$, where $w_u$ is not acceptable to any other man (line~\ref{li:dummy}). We call the modified instance~$\I'$. This standard technical modification is to ensure that all men are matched in all stable matchings. At start, all edges are \emph{inactive} (line~\ref{li:inact}).

\textbf{First phase.} The first phase of our algorithm (lines~\ref{li:ph1start}-\ref{li:ph1end}) imitates the classical Gale-Shapley deferred acceptance procedure. In the first round, each unmatched man simultaneously proposes to all women in his top tie (line~\ref{li:nexttie}). Inactive edges that carry a proposal become \emph{active} as soon as the proposal arrives. The tie that a man has just proposed along is called the man's \emph{proposal tie}. If all edges in the proposal tie are rejected, the man steps down on his list and proposes along all edges in the next tie (lines~\ref{li:ph1start}-\ref{li:nexttie}).

Proposals cause two types of rejections in the graph (lines~\ref{li:newactive}-\ref{li:ph1reject}), based on the rules below. Notice that these rules are more sophisticated than in the Gale-Shapley or Irving algorithms~\cite{GS62,Irv94}. The most striking difference may be that rejected edges are not deleted from the graph, since they can very well carry a proposal later. However, the term active only describes proposal edges that have not been rejected yet, not even prior to the proposal.

\begin{itemize}
\item For each new proposal (but not necessarily active) edge $uw$, $w$ rejects all edges to which $uw$ is strictly preferred (lines~\ref{li:newactive}-\ref{li:9}). Note again that $uw$ might have been rejected earlier than being proposed along, in which case $uw$ is a proposal edge without being active.

\item The second kind of rejections are detailed in Algorithm~\ref{alg:reject}. We search for a man in the set $R$ of men to be investigated, whose set of active edges has cardinality at most~1 (lines~\ref{li:25}-\ref{li:27}). If any such man has exactly one active edge $uw$ (line~\ref{li:oneact}), then all other edges that are incident to $w$ and incomparable to $uw$ are rejected (line~\ref{li:29}). If man $u'$ has lost an active edge in the previous operation, then $u'$ is added back to the set $R$ of men to be investigated in later rounds (line~\ref{li:30}). The other case is when a man $u$ has no active edge at all (line~\ref{li:zeroact}). In this case, all edges that are incident to any neighbor $w$ of $u$ in his---now fully rejected---proposal tie and incomparable to $uw$ at $w$ are rejected (line~\ref{li:34}). The set $R$ is again supplemented by those men who lost active edges during the previous operation (line~\ref{li:zeroact_r}). Finally, the man $u$ chosen at the beginning of this rejection round is excluded from~$R$.

\end{itemize}
As mentioned earlier, men without any active edge proceed to propose along the next tie in their list. These operations are executed until there is no more edge to propose along or to reject, which marks the end of the first phase.

\textbf{Second phase.} In the second phase, the set of active edges induce the graph~$G_A$, on which we examine the critical set $U'$  (lines~\ref{li:GA}-\ref{li:Hall}). If $U'$ is not empty, then all active edges of each $u \in U'$ are rejected (line~\ref{li:Hall_reject}). These rejections might trigger more rejections, which is done by calling Algorithm~\ref{alg:reject} as a subroutine (line~\ref{li:Hall_strong_reject}). The mass rejections in line~\ref{li:Hall_reject} generate a new proposal tie for at least one man, returning to the first phase (line~\ref{li:goto}). Note that an empty critical set leads to producing the output, which is described just below.

\textbf{Output.}
In the final set of active edges, an arbitrary maximum matching $M$ is calculated (line~\ref{li:M}). If $M$ covers all women who have ever had an active edge, then we send it to the output (lines~\ref{li:cover}-\ref{li:outputM}), otherwise we report that no stable matching exists (lines~\ref{li:22}-\ref{li:outputNO}).

\smallskip

We prove Theorem~\ref{th:st_pol} via a number of claims, building up the proof as follows. The first three claims provide the technical footing for the last two claims. Claim~\ref{le:dummy_women} is a rather technical observation about the righteousness of the input initialization. An edge appearing in any stable matching is called a \emph{stable edge}. Claim~\ref{le:stable_edges} shows that no stable edge is ever rejected. Claim~\ref{le:women_activated} proves that all stable matchings must cover all women who have ever received an offer. Then, Claim~\ref{le:cover_is_stable} proves that if the algorithm outputs a matching, then it must be stable, and Claim~\ref{le:stable_max_all_women} shows the opposite direction; if stable matchings exist, then one is outputted by our algorithm.

\begin{claim}
	\label{le:dummy_women}
	A matching in $\I'$ is stable if and only if its restriction to $\I$ is stable and it covers all men in $\I'$.
\end{claim}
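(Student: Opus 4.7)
The plan is to prove the two directions separately, with the dummy-woman construction ensuring that matching-to-a-dummy in $\I'$ corresponds cleanly to being unmatched in $\I$.

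For the forward direction, suppose $M'$ is strongly stable in $\I'$. I would first show that $M'$ covers every man $u$: if $u$ were unmatched under $M'$, then its dedicated dummy $w_u$ would also be unmatched (since $w_u$ is acceptable only to $u$), and hence $uw_u$ would strongly block $M'$ because both endpoints strictly prefer being matched to being unmatched, i.e.\ $w_u \prec_u \emptyset$ and $u \prec_{w_u} \emptyset$. Next, to see that $M := M' \cap E(\I)$ is strongly stable in $\I$, I would take any hypothetical blocking edge $uw$ of $M$ in $\I$ and argue that it would still block $M'$ in $\I'$: if $M'(u) = w_u$ then $M(u) = \emptyset$, and the condition $w \prec_u \emptyset$ that comes from $uw$ blocking $M$ implies $w \prec_u w_u$ since $w_u$ sits at the bottom of $u$'s list; otherwise $M'(u) = M(u)$ and the condition is inherited directly. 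For the woman side, $w$ is an original woman, so $M'(w) = M(w)$ and the condition is trivially preserved.

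For the reverse direction, suppose $M := M' \cap E(\I)$ is strongly stable in $\I$ and $M'$ covers every man of $\I'$. Assume for contradiction that $uw$ strongly blocks $M'$. I would split into two cases. If $w = w_{u'}$ is a dummy woman, then $u' = u$ (else $uw$ is not an edge), and because $w_u$ lies strictly at the bottom of $u$'s list while $u$ is matched to some real $M'(u) \in W$, we have $M'(u) \prec_u w_u$, contradicting the blocking condition at $u$. If instead $w \in W$, then $uw \in E(\I)$ and I would verify that $uw$ also blocks $M$: on $u$'s side, either $M'(u) = w_u$ (so $M(u) = \emptyset$ and $w \prec_u w_u$ upgrades to $w \prec_u \emptyset$), or $M'(u) \in W$ and the condition transfers; on $w$'s side, $M'(w) = M(w)$ because $w$ is never matched to any dummy. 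This contradicts stability of $M$ in $\I$.

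The main subtlety is the case in which a man is matched to his dummy, since then the restricted matching $M$ leaves him unmatched; the translation of the preference conditions hinges on the fact that $w_u$ sits strictly below every real neighbor of $u$, and that being matched to any acceptable woman is strictly preferred to being unmatched. Once these two facts are used, both directions follow by routine case analysis on whether each endpoint of the candidate blocking edge is matched to a dummy or to a real partner.
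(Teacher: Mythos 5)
Your proof is correct and follows essentially the same route as the paper's: the dummy edge $uw_u$ forces every man to be covered in any stable matching of $\I'$, and blocking edges on real edges transfer between $\I$ and $\I'$ because $w_u$ sits strictly below every real neighbor of $u$ and below only the outside option $\emptyset$. The paper states this more tersely ("dummy edges are last-choice edges and regular edges block in both instances simultaneously"), while you spell out the case analysis explicitly; there is no substantive difference.
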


\begin{proof}
	If a matching in $\I'$ leaves a man $u$ unmatched, then $u w_u$ blocks the matching. Thus all stable matchings in $\I'$ cover all men. Furthermore, the restriction to~$\I$ of a stable matching in $\I'$ cannot be blocked by any edge in $\I$, because this blocking edge also exists in~$\I'$.
	
	A stable matching in $\I$, supplemented by the dummy edges for all unmatched men cannot be blocked by any edge in $\I'$, because dummy edges are last-choice edges and regular edges block in both instances simultaneously.
\end{proof}

\begin{claim}
	\label{le:stable_edges}
	No stable edge is ever rejected in the algorithm.
\end{claim}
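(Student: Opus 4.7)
The plan is to prove Claim~\ref{le:stable_edges} by induction on the sequence of rejections executed by the algorithm, with the inductive hypothesis that no stable edge has been rejected up to the current step. For each rejection I assume, for contradiction, that the edge being rejected belongs to some stable matching $M$, and derive a contradiction (typically by exhibiting a blocking edge for $M$, or by contradicting the critical-set inequality).

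For the proposal-based rejection on line~\ref{li:rejectworse}, where $u'w$ is rejected because a new proposal $uw$ arrives with $u \prec_w u'$: because $u$ has just advanced to $w$'s tie, all ties strictly above were fully rejected, and the inductive hypothesis forces $M(u)$ to lie in $u$'s current tie or strictly below; therefore $w \prec_u M(u)$ or $w \sim_u M(u)$ holds, which together with $u \prec_w u' = M(w)$ makes $uw$ a strong blocking edge of $M$. For the STRONG\_REJECT branch on line~\ref{li:29}, where $u$ has the unique active edge $uw$ and $u'w$ is rejected because $u' \sim_w u$: the uniqueness of $uw$, combined with the induction, rules out every placement of $M(u)$ except strictly below the current tie (any other option would either equal $w$, contradicting $u \neq u'$, or contradict induction), so $w \prec_u M(u)$ strictly; together with $u \sim_w M(w) = u'$, the second form of the blocking definition shows that $uw$ blocks $M$. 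The branch on line~\ref{li:34} is analogous, because $u$ having no active edge means his entire proposal tie has been rejected, again forcing $M(u)$ strictly below that tie.

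The main obstacle is the Hall-based rejection on line~\ref{li:Hall_reject}. Suppose some active edge $u_0 w_0$ with $u_0 \in U'$ belongs to a stable matching $M$. I aim to show that $uM(u)$ must be active for every $u \in U'$; this would yield an injection $M|_{U'} \colon U' \hookrightarrow \mathcal{N}_{G_A}(U')$ and contradict the critical-set inequality $|U'| > |\mathcal{N}_{G_A}(U')|$. If instead some $u \in U'$ had $uM(u)$ not active, the inductive hypothesis would place $u$'s current tie strictly above $M(u)$, so $u$ would possess an active edge $uc$ with $c \prec_u M(u)$. Should $u \prec_c M(c)$ or $u \sim_c M(c)$ hold, then $uc$ strongly blocks $M$ and we are done; otherwise, since the women's preferences are asymmetric and forbid the $||$ relation, $M(c) = d$ is strictly preferred to $u$ at $c$, and induction together with the stability of $dc$ forces $dc$ to be active as well. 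I then plan to iterate along the cascade $u \to c \to d \to \ldots$ through the $M$-partners: finiteness of the graph together with the precise bookkeeping of proposals and the STRONG\_REJECT rules should force this chain to terminate in either a direct blocking edge for $M$ or a stable edge that was rejected earlier, contradicting the stability of $M$ or the inductive hypothesis respectively.

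The cascade analysis is the crux of the proof. In Irving's algorithm for ties on both sides, transitivity at the women's side closes such a chain in a single step, but the intransitivity allowed by asymmetric preferences requires tracing the cascade through multiple layers; a careful case analysis of the order of proposals at shared women, combined with the rejection rules of STRONG\_REJECT, is needed to eliminate every problematic configuration.
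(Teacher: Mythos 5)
Your treatment of the rejections in lines~\ref{li:rejectworse}, \ref{li:29} and~\ref{li:34} is correct and essentially the paper's argument (the paper phrases the first case as ``an earlier stable edge must already have been rejected'' rather than exhibiting the blocking edge directly, but it is the same observation). The gap is in the Hall-based rejection of line~\ref{li:Hall_reject}, which is the crux of the claim and which you yourself leave unfinished. Two things go wrong there. First, the step ``$M(c)=d$ is strictly preferred to $u$ at $c$, and induction together with the stability of $dc$ forces $dc$ to be active'' is false: if $d \prec_c u$ and $d$ had ever proposed to $c$, then line~\ref{li:rejectworse} would have rejected $uc$ (either at the moment of $d$'s proposal, or $uc$ would already have been a rejected edge when $u$ later proposed along it), contradicting the assumption that $uc$ is active. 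So $dc$ is necessarily \emph{inactive}, and the correct conclusion is that $d$'s proposal tie still lies strictly above $c$, i.e.\ $d$ is in the same situation as $u$. Second, the cascade built from this corrected step need not terminate in a contradiction: it closes into a cyclic configuration of men $u_i$ and women $c_i$ with $u_ic_i$ active, $M(u_i)=c_{i-1}$, $c_i \prec_{u_i} c_{i-1}$, $M(c_i)=u_{i+1} \prec_{c_i} u_i$, and $u_{i+1}$ never having proposed to $c_i$. In such a cycle no edge blocks $M$ and no stable edge has been rejected (the edges $u_{i+1}c_i\in M$ were simply never proposed along); this is exactly the situation of the classical $2\times 2$ instance with two stable matchings, so it is realizable and carries no contradiction by itself. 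Consequently your target statement --- that $uM(u)$ is active for \emph{every} $u\in U'$ --- cannot be reached this way: the cascade leaves $U'$ and the critical-set structure is never used again.

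The paper avoids this by not attempting to match all of $U'$ along active edges. It sets $W'=\mathcal{N}_{G_A}(U')$, $U''=\{u\in U': M(u)\in W'\}$ and $W''=\{w\in W': M(w)\in U'\}$, shows by a purely local argument (incomparability at the woman, plus the position of the man's $M$-partner relative to his proposal tie) that any $G_A$-edge from $U'\setminus U''$ into $W''$ would force its endpoint into $U''$, concludes $\mathcal{N}_{G_A}(U'\setminus U'')\subseteq W'\setminus W''$, and hence $\df(U'\setminus U'')\geq \df(U')$ with $U'\setminus U''\subsetneq U'$ --- contradicting that the critical set is the \emph{minimal} set of maximum deficiency. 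If you restrict your injection idea to $U''$ you still need this deficiency computation to finish, at which point you have reproduced the paper's proof; so the Hall case needs to be reworked along those lines.
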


\begin{proof}
    Let us suppose that $uw$ is the first rejected stable edge and the corresponding stable matching is~$M$. There are four rejection calls, in lines~\ref{li:rejectworse}, \ref{li:Hall_reject}, \ref{li:29}, and \ref{li:34}. In all cases we derive a contradiction. Our arguments are illustrated in Figure~\ref{fi:stable_edges}.
    
	\begin{itemize}
  		
        \item Line~\ref{li:rejectworse}: $uw$ was rejected because $w$ received a proposal from a man $u'$ such that $u' \prec_{w} u$.\\        
        Since $M$ is stable, $u'$ must have a partner $w'$ in $M$ such that $w' \prec_{u'} w$. We also know that $u'$ has reached $w$ with its proposal ties, thus, due to the monotonicity of proposals, $u'w' \in M$ must have been rejected before $uw$ was rejected. This contradicts our assumption that $uw$ was the first rejected stable edge.
        
        \item Lines~\ref{li:29} and \ref{li:34}: rejection was caused by a man $u'$ such that $u' \sim_{w} u$.\\
        Either the whole proposal tie of $u'$ was rejected or $u'w$ was the only active edge within this tie. Since $M$ is stable, $u'$ must have a partner $w'$ in $M$. Since $u'w'$ is a stable edge, it cannot have been rejected previously. Consequently, $w \prec_{u'} w'$. Thus, $u'w$ blocks $M$, which contradicts its stability.
        
        \item Line~\ref{li:Hall_reject}: $uw$ was rejected as an active edge incident to the critical set $U'$ in $G_A$.\\
        Let $W'=\mathcal{N}_{G_A}(U')$, $U'' = \{u \in U': M(u) \in W'\}$, and $W'' = \{w \in W': M(w) \in U'\}$. In words, $W'$ is the neighborhood of $U'$, while $U''$ and $W''$ represent the men and women in $U'$ and $W'$ who are paired up in~$M$. Due to our assumption, $u \in U''$ and $w \in W''$.
        
        We claim that  $|U' \setminus U''| < |U'|$ and $\df(U' \setminus U'') \geq \df(U')$, which contradicts the fact that $U'$ is critical. Since $U'' \neq \emptyset$, the first part holds. Note that $|U''|=|W''|$, so it suffices to show that $\mathcal{N}_{G_A}(U' \setminus U'') \subseteq W' \setminus W''$, because in that case
        \begin{IEEEeqnarray*}{rCl}
        	\df(U'\setminus U'') = |U'\setminus U''| - |\mathcal{N}_{G_A}(U'\setminus U'')| & \geq & |U' \setminus U''| - |W' \setminus W''| = \\
            & = & (|U'| - |W'|) - (|U''| - |W''|) = \\
            & = & |U'| - |W'| = \df(U'), 
        \end{IEEEeqnarray*}
         which would prove the second part of our claim.
         
         What remains to show is that $\mathcal{N}_{G_A}(U' \setminus U'') \subseteq W' \setminus W''$. Suppose that there exists an edge $ab$ in $G_A$ from $U'\setminus U''$ to~$W''$. We know that $b \in W''$, hence $a' = M(b) \in U''$ and, obviously, $a' \neq a \notin U''$. Moreover, $ab$ and $a'b$ are edges in $G_A$, thus both of them are active. Therefore, $a \sim_{b} a'$, for otherwise $b$ would have rejected one of them. In order to keep $M$ stable, $a$ must be paired up in $M$ with some woman $b'$. Since no stable edge has been rejected so far and $ab$ does not block $M$, therefore $b' \sim_{a} b$, thus $b'$ is in $a$'s proposal tie. Edge $ab'$ is stable and no stable edge has been rejected yet, thus $ab'$ is active along with~$ab$. Therefore, $ab' \in E(G_A)$ and $b' \in W'$. Moreover, $ab' \in M$, hence $a \in U''$ and $b' \in W''$, which contradicts the assumption that $a \notin U''$. \qedhere
    \end{itemize}
\end{proof}

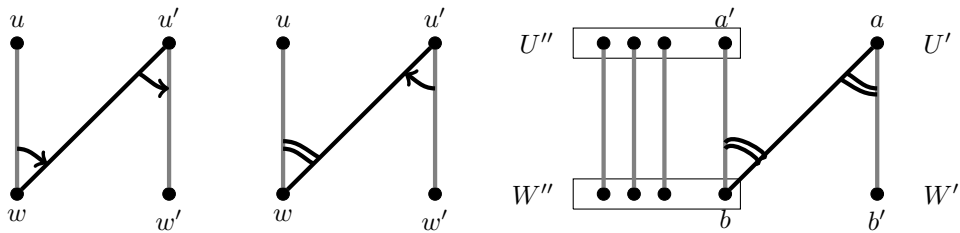
\begin{figure}[ht]
\centering
\begin{minipage}{0.25\textwidth}
\centering
\begin{tikzpicture}[scale=1, transform shape]

\node[vertex, label=above:$u$] (u) at (0, \d) {};
\node[vertex, label=above:$u'$] (u') at (\b, \d) {};
\node[vertex, label=below:$w$] (w) at (0, 0) {};
\node[vertex, label=below:$w'$] (w') at (\b, 0) {};

\draw [ultra thick, gray] (u) -- (w);
\draw [ultra thick, gray] (u') -- (w');
\draw [ultra thick] (u') -- (w);
\path [->,draw,ultra thick] ($ (w) !.3! (u) $) to [out=0,in=150] ($ (w) !.2! (u') $);
\path [<-,draw,ultra thick] ($ (u') !.3! (w') $) to [out=180,in=-30] ($ (u') !.2! (w) $);
\end{tikzpicture}
\end{minipage}\begin{minipage}{0.25\textwidth}
\centering
\begin{tikzpicture}[scale=1, transform shape]

\node[vertex, label=above:$u$] (u) at (0, \d) {};
\node[vertex, label=above:$u'$] (u') at (\b, \d) {};
\node[vertex, label=below:$w$] (w) at (0, 0) {};
\node[vertex, label=below:$w'$] (w') at (\b, 0) {};

\draw [ultra thick, gray] (u) -- (w);
\draw [ultra thick, gray] (u') -- (w');
\draw [ultra thick] (u') -- (w);
\draw [ultra thick] ($ (w) !.3! (u) $) to [out=0,in=150] ($ (w) !.2! (u') $);
\draw [ultra thick] ($ (w) !.35! (u) $) to [out=0,in=150] ($ (w) !.24! (u') $);
\path [->,draw,ultra thick] ($ (u') !.3! (w') $) to [out=180,in=-30] ($ (u') !.2! (w) $);
\end{tikzpicture}
\end{minipage}
\begin{minipage}{0.4\textwidth}
\centering
\begin{tikzpicture}[scale=1, transform shape]

\node[vertex, label=above:$a$] (a) at (0, \d) {};
\node[vertex, label=above:$a'$] (a') at (-\b, \d) {};
\node[vertex, label=below:$b'$] (b') at (0, 0) {};
\node[vertex, label=below:$b$] (b) at (-\b, 0) {};

\node[vertex] (a1) at (-1.4*\b, \d) {};
\node[vertex] (a2) at (-1.6*\b, \d) {};
\node[vertex] (a3) at (-1.8*\b, \d) {};
\node[vertex] (b1) at (-1.4*\b, 0) {};
\node[vertex] (b2) at (-1.6*\b, 0) {};
\node[vertex] (b3) at (-1.8*\b, 0) {};

\draw (-2*\b,0.9*\d) -- (-0.9*\b,0.9*\d) -- (-0.9*\b,1.1*\d) -- (-2*\b,1.1*\d) -- (-2*\b,0.9*\d);
\node[label=left:$U''$] (U'') at (-2*\b,\d) {};

\draw (-2*\b,-0.1*\d) -- (-0.9*\b,-0.1*\d) -- (-0.9*\b,0.1*\d) -- (-2*\b,0.1*\d) -- (-2*\b,-0.1*\d);
\node[label=left:$W''$] (W'') at (-2*\b,0) {};

\node[vertex, white, label=right:$U'$] (U') at (0.2*\b, \d) {};
\node[vertex, white, label=right:$W'$] (W') at (0.2*\b, 0) {};

\draw [ultra thick, gray] (a1) -- (b1);
\draw [ultra thick, gray] (a2) -- (b2);
\draw [ultra thick, gray] (a3) -- (b3);

\draw [ultra thick, gray] (a') -- (b);
\draw [ultra thick, gray] (a) -- (b');
\draw [ultra thick] (a) -- (b);
\path [draw,ultra thick] ($ (b) !.2! (a) $) to [out=0,in=60] ($ (b) !.3! (a') $);
\path [draw,ultra thick] ($ (b) !.24! (a) $) to [out=0,in=60] ($ (b) !.34! (a') $);
\path [draw,ultra thick] ($ (a) !.3! (b') $) to [out=180,in=-30] ($ (a) !.2! (b) $);
\path [draw,ultra thick] ($ (a) !.34! (b') $) to [out=180,in=-30] ($ (a) !.24! (b) $);
\end{tikzpicture}
\end{minipage}
\caption{The three cases in Claim~\ref{le:stable_edges}. Gray edges are in~$M$. The arrows point to the strictly preferred edges.}
\label{fi:stable_edges}
\end{figure}

\begin{claim}
\label{le:women_activated}
Women who have ever had an active edge must be matched in all stable matchings.
\end{claim}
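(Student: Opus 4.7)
The plan is to argue by contradiction: assume some woman $w$ has had an active edge at some stage of the algorithm, yet some stable matching $M$ leaves $w$ unmatched, and then construct a strongly blocking edge for $M$.

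First I would identify a man $u$ who, at some point, had an active edge with $w$. By the structure of the first phase, an edge $uw$ becomes active only when $u$ actually proposes along $w$'s tie in $u$'s list; so $w$ belongs to some proposal tie of $u$, say the $k$-th. The key observation is that in order for $u$ to have reached this tie, all of $u$'s edges to women in ties $1,\ldots,k-1$ must have been rejected by the algorithm (otherwise $u$ would still have had an active edge and would not have stepped down). Writing $w' = M(u)$ (which exists by Claim~\ref{le:dummy_women}, since every man is matched in any stable matching of $\I'$), I would argue that $w'$ cannot sit in any earlier tie than $w$ on $u$'s list: if it did, then $uw'$ would be among the rejected edges, contradicting Claim~\ref{le:stable_edges} which guarantees that the stable edge $uw'$ is never rejected. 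Hence either $w \sim_u w'$ or $w \prec_u w'$; in either case $w$ is not strictly preferred over by $w' = M(u)$ on $u$'s side, i.e.\ $w \prec_u M(u)$ or $w \sim_u M(u)$.

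Now I would exhibit $uw$ as a strongly blocking edge of $M$. Since $w$ is unmatched in $M$, $M(w) = \emptyset$ and therefore $u \prec_w M(w)$ holds trivially, because every acceptable vertex is preferred to being unmatched. Combined with the conclusion from the previous paragraph ($w \prec_u M(u)$ or $w \sim_u M(u)$), edge $uw$ fulfils the first clause of the blocking edge definition for strong stability, contradicting the assumed stability of $M$. This completes the proof.

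The only real obstacle is making sure that the ``$u$ cannot have skipped $w'$'' argument is airtight given the somewhat non-standard proposal mechanism (where rejected edges are not removed and where ties are proposed to all at once). The cleanest way to phrase it is to use the monotonicity of $u$'s proposal ties together with Claim~\ref{le:stable_edges}: the stable edge $uw'$ has never been rejected, hence $u$ has never been forced to move past the tie containing $w'$, which places $w'$ in $u$'s current or a later tie at the moment $u$ proposed to $w$. Everything else is a direct application of the definitions.
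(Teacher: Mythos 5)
Your argument is correct and is essentially the paper's own proof, just spelled out in more detail: both use Claim~\ref{le:stable_edges} to conclude that $M(u)$ cannot lie in a tie strictly before the one containing $w$ (so $w \prec_u M(u)$ or $w \sim_u M(u)$), and then observe that an unmatched $w$ makes $uw$ a strongly blocking edge. No gaps.
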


\begin{proof}
Claim~\ref{le:stable_edges} shows that stable matchings allocate each man $u$ a partner not better than his final proposal tie. If a man $u$ proposed to woman $w$ and yet $w$ is unmatched in the stable matching $M$, then $uw$ blocks $M$, which contradicts the stability of~$M$.
\end{proof}

\begin{claim}
\label{le:cover_is_stable}
If our algorithm outputs a matching, then it is stable.
\end{claim}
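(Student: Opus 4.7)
My plan is to argue by contradiction: assume the output matching $M$ is strongly blocked by some edge $uw$, and trace the consequences through the algorithm to show that $M(w)w$ must have been rejected at some moment---contradicting $M(w)w \in M \subseteq G_A$.

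The first step is to observe that $u$ must have proposed to $w$ at some earlier iteration. The dummy women added in the initialisation guarantee that every man is matched, so $M(u)$ lies in $u$'s final proposal tie. Either form of the strong blocking condition at the $u$-end gives $w \prec_u M(u)$ or $w \sim_u M(u)$, so $w$ sits in the same tie as $M(u)$ or in a strictly earlier tie. The Phase~1 dynamics---a man only advances to the next tie once his current one is fully rejected---force $u$ to have proposed to $w$ at some earlier iteration, so $uw$ was a proposal edge.

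The second step is a case split on the $w$-side of the block. In the first case, $u \prec_w M(w)$ holds strictly. Then at the iteration when $u$ proposed to $w$, line~\ref{li:rejectworse} acts on the new proposal edge $uw$ and rejects every $u'w$ with $u \prec_w u'$; in particular $M(w)w$ is rejected, contradicting its presence in $G_A$ at termination. In the complementary case, $u \sim_w M(w)$, and for blocking we need $w \prec_u M(u)$ strictly. Then $u$'s final proposal tie strictly succeeds the tie of $w$, so $u$ must have exhausted every edge of the tie of $w$ and moved on. At the moment the last such edge became inactive, $u$ had zero active edges while his recorded proposal tie still contained $w$. The ensuing call to STRONG\_REJECT then processes $u$ in that state, and line~\ref{li:34} rejects every $u'w$ with $u' \sim_w u$; since $M(w) \sim_w u$, the edge $M(w)w$ is rejected---again a contradiction.

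The main obstacle I anticipate is the bookkeeping around STRONG\_REJECT in the second subcase: one must verify that the run of STRONG\_REJECT which witnesses $u$'s zero-active-edge state actually processes $u$ in that state, with the tie of $w$ still recorded as his proposal tie, before $u$ gets to propose to his next tie. The argument rests on three features of the algorithm: each call to STRONG\_REJECT reinitialises $R = U$; any man who loses an active edge during STRONG\_REJECT is re-added to $R$ (lines~\ref{li:30} and~\ref{li:zeroact_r}); and a man only advances in the outer loop in its next iteration, strictly after the current STRONG\_REJECT call has terminated. Together these ensure that $u$ is observed in a state where line~\ref{li:34} fires on $M(w)w$, closing the contradiction.
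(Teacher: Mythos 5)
Your treatment of the case where $w$ is matched in $M$ is essentially the paper's own argument: you trace the block back to the fact that $u$ must have proposed to $w$, and then show that $wM(w)$ would have been rejected --- via line~\ref{li:rejectworse} when $u \prec_w M(w)$, and via line~\ref{li:34} of STRONG\_REJECT when $u \sim_w M(w)$ and $w \prec_u M(u)$. Your bookkeeping for the second subcase (that $u$ is re-examined with zero active edges while his proposal tie still contains $w$, regardless of which rejection rule emptied that tie) is correct and in fact spelled out more carefully than in the paper.

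There is, however, a genuine gap: you never handle the case where $w$ is unmatched in $M$, and correspondingly you never invoke the one hypothesis that gives the claim its content, namely that the algorithm only outputs $M$ when it covers all women who have ever received a proposal (the test in line~\ref{li:cover}). If $M(w)=\emptyset$, then by the convention $u \prec_w \emptyset$ the blocking pair falls formally into your first case, but there is no edge $M(w)w$ whose rejection could produce the contradiction, so the argument yields nothing. This case cannot be dismissed: a maximum matching of the final $G_A$ that covers all men but leaves some proposed-to woman uncovered is in general blocked, which is precisely why the algorithm checks the covering condition before declaring success. The repair is short and is exactly what the paper does in its first case: by your Step~1, $u$ proposed to $w$, so $w$ has held a proposal, and the output condition then forces $w$ to be covered by $M$; hence $M(w)\neq\emptyset$ and your two cases apply.
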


\begin{proof}
We need to show that any maximum matching $M$ in $G_A$ is stable, if it covers all women who have ever held a proposal. Let $M$ be such a matching. Due to the exit criteria of the second phase (lines~\ref{li:Hall} and~\ref{li:u'emptyset}), $M$ covers all men. By contradiction, let us assume that $M$ is blocked by an edge $uw$. This can occur in three cases.
\begin{itemize}
	\item
    While $w$ is unmatched, $u$ does not prefer $M(u)$ to~$w$.\\
	Since $uw$ carried a proposal at the same time or before $uM(u) \in E(G_A)$ was activated, $w$ is a woman who has held an offer during the course of the algorithm. We assumed that all these women are matched in~$M$.
        
	\item While $w \prec_u M(u)$, $w$ does not prefer $M(w)$ to~$u$.\\
	The full tie at $u$ containing $uw$ must have been rejected in the algorithm, otherwise $uM(u)$ would not be an active edge. We know that either $u \prec_w M(w)$ or $u \sim_w M(w)$ holds. If $u \prec_w M(w)$, then $wM(w)$ had to be rejected when $u$ proposed to $w$, which contradicts our assumption that $wM(w) \in E(G_A)$. Hence, $u \sim_w M(w)$. Thus, when $uw$ and its full tie was rejected at $u$, $M(w)w$ also should have been rejected in a STRONG\_REJECT procedure, which leads to the same contradiction with $wM(w) \in E(G_A)$.
	
	\item While $u \prec_w M(w)$, $u$ does not prefer $M(u)$ to~$w$.\\
	Since $uM(u)$ is an active edge, $uw$ has carried a proposal, because $M(u)$ is not preferred to $w$ by~$u$. When $uw$ was proposed along, $w$ should have rejected $M(w)w$, to which $uw$ is strictly preferred. This contradicts our assumption that $wM(w) \in E(G_A)$.\qedhere
\end{itemize}
\end{proof}

\begin{claim}
	\label{le:stable_max_all_women}
    If $\I'$ admits a stable matching $M'$, then any maximum matching $M$ in the final $G_A$ covers all women who have ever held a proposal.
\end{claim}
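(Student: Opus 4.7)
I would argue by contradiction. Suppose $M'$ is stable in $\I'$ and some maximum matching $M$ in the final $G_A$ misses $w_0 \in W^*$. The termination criterion forces the critical set to be empty, whence $\nu(G_A) = |U|$ and $M$ covers every man. My plan is to modify $M'$ into a matching $\tilde{M}' \subseteq E(G_A)$ that preserves the vertex set covered by $M'$, and then to reach a contradiction by comparing $M$ with $\tilde{M}'$.

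Call a man $u$ \emph{bad} if $uM'(u) \notin E(G_A)$. By Claim~\ref{le:stable_edges} the edge $uM'(u)$ was never rejected, so $u$ never proposed to $M'(u)$; hence, since men have tied preferences, $u$'s current proposal tie lies strictly above $M'(u)$, and $u$ carries an active edge $uw'$ with $w' \prec_u M'(u)$. Non-blocking of $uw'$ by $M'$, combined with the asymmetry of the women's preferences (which excludes the relation $||$), forces $M'(w') \prec_{w'} u$ strictly; and line~\ref{li:rejectworse} then shows $M'(w')$ cannot have proposed to $w'$---otherwise $uw'$ would have been rejected upon that proposal. Consequently $M'(w')$ is bad as well, and iterating produces a cascade of bad men that, by finiteness, closes into a minimal cycle on whose vertex set the active cascade edges and the stable $M'$-edges form two perfect matchings. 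Swapping one for the other inside each cycle yields $\tilde{M}' \subseteq E(G_A)$ with the same vertex coverage as $M'$; by Claims~\ref{le:dummy_women} and~\ref{le:women_activated}, $\tilde{M}'$ covers $U \cup W^*$.

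Now consider $M \triangle \tilde{M}'$ inside $G_A$. Since $w_0$ is covered by $\tilde{M}'$ but not by $M$, it is an endpoint of some alternating path $P$; because $|M| = |\tilde{M}'| = |U|$, the other endpoint $v_0$ is covered by $M$ but not by $\tilde{M}'$. Since $\tilde{M}'$ saturates $U$, we have $v_0 \in W$, and since the $M$-edge at $v_0$ is active, $v_0$ has once held a proposal and so $v_0 \in W^*$. But $\tilde{M}'$ covers $W^*$, which contradicts $v_0$ being uncovered by $\tilde{M}'$, closing the argument.

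The main obstacle lies in the cascade construction. One must verify that the bad men cycle cleanly into disjoint minimal cycles of distinct vertices whose chosen active edges together form a matching, so that the global swap against $M'$ preserves both the matching property and the vertex coverage. Distinctness within a minimal cycle follows from the bijectivity of $M'$ between matched partners, and different minimal cycles are vertex-disjoint because they partition the bad men; still, the bookkeeping of which active edges to select at each $u$ (each bad man may have several active edges in his final tie) is the delicate part.
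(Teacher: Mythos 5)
Your overall strategy is workable but far more elaborate than necessary, and as written it has one genuine gap: the assertion that the bad men ``partition'' into minimal cycles. The cascade you describe is a self-map $f$ on the set of bad men, $f(u)=M'(w'_u)$ for a chosen active edge $uw'_u$; in a functional digraph only the vertices in the eventual image lie on cycles, and nothing you say rules out tails. Indeed $f$ need not be injective: two bad men $u_1\neq u_2$ may both hold active edges to the same woman $w'$ (women may hold several active edges simultaneously), in which case $f(u_1)=f(u_2)$ and at least one of them sits on a tail. A bad man on a tail keeps his $M'$-edge after your swap, so $\tilde{M}'\not\subseteq E(G_A)$ and the final alternating-path step collapses. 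The gap is repairable---every active neighbour of a bad man is eligible for the cascade, so the emptiness of the critical set gives Hall's condition on the bipartite graph of bad men versus their active neighbours; a system of distinct representatives then makes $f$ injective, and an injective self-map of a finite set is a permutation---but you neither state nor prove any of this. A second, smaller imprecision: ``because $|M|=|\tilde{M}'|=|U|$ the other endpoint is covered by $M$ but not by $\tilde{M}'$'' does not follow from the cardinalities alone for the particular path containing $w_0$; you need $\tilde{M}'\subseteq E(G_A)$ together with the maximality of $M$ in $G_A$ to exclude the case of an $M$-augmenting path.

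More importantly, the entire construction of $\tilde{M}'$ is a detour. You already hold all the ingredients for the short counting argument that the paper uses: by Claims~\ref{le:dummy_women} and~\ref{le:women_activated}, $M'$ covers all men and all women who have ever held a proposal; $M$ covers all men because the critical set is empty in line~\ref{li:u'emptyset}, so $|M|=|M'|=|U|$ and both matchings cover exactly $|U|$ women; every woman covered by $M\subseteq E(G_A)$ has an active edge and hence has held a proposal, so the women covered by $M$ form a subset of those covered by $M'$; equal cardinality then forces the two sets of covered women to coincide, and in particular $M$ covers every woman who has ever held a proposal. No modification of $M'$ and no symmetric-difference analysis is needed.
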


\begin{proof}
	From Claims~\ref{le:dummy_women} and~\ref{le:women_activated} we know that $M'$ covers all women who have ever held a proposal and all men. It is also obvious that matching $M$ found in line~\ref{li:M} covers all men, for otherwise $U'$ could not have been the empty set in line~\ref{li:u'emptyset} and the execution would have returned to the first phase. This means that $|M|=|M'|$. On the other hand, all women covered by $M \subseteq E(G_A)$ are fit with active edges in $G_A$. Therefore, women covered by $M$ represent only a subset of women who have ever had an active edge, i.e.~the women covered by~$M'$. In order to $M$ and $M'$ have the same cardinality, they must cover exactly the same women. Thus, $M$ covers all women who have ever received a proposal.
\end{proof}

\begin{cor}
	\label{cor:admits_sm_outputs}
	If $\I$ admits a stable matching then our algorithm outputs one.
\end{cor}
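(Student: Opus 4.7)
The plan is to chain together the five claims already established in the proof of Theorem~\ref{th:st_pol}. I would start from a stable matching $M_0$ of $\I$ and extend it by adding the dummy edge $uw_u$ for every man $u$ not covered by $M_0$; by Claim~\ref{le:dummy_women}, the resulting matching $M'$ is stable in $\I'$. This reduces the corollary to showing that the algorithm exits through line~\ref{li:outputM} rather than line~\ref{li:outputNO}.

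The next step is to argue termination. The key observation is that edge status evolves monotonically: once an edge is rejected it is never re-activated, since a rejected edge that is later proposed along stays inactive (as explicitly remarked right after the description of the first phase). Every iteration of either phase either strictly reduces the set of not-yet-rejected edges or leaves Phase~2 with $U' = \emptyset$. Because the edge set is finite, the algorithm must halt with an empty critical set in $G_A$, so control passes to line~\ref{li:M}.

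Now I would invoke Claim~\ref{le:stable_max_all_women} on the stable matching $M'$ of $\I'$: it guarantees that any maximum matching $M$ of the terminal graph $G_A$ covers every woman who has ever held a proposal. Consequently the condition on line~\ref{li:cover} is satisfied and the algorithm outputs $M \cap E$. By Claim~\ref{le:cover_is_stable}, this $M$ is stable in $\I'$; one more application of Claim~\ref{le:dummy_women} transfers stability back to the original instance, so $M \cap E$ is a stable matching in $\I$.

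The only point requiring any genuine thought is the termination argument, because Phases~1 and~2 interleave and the STRONG\_REJECT subroutine can feed fresh rejections back into Phase~1 proposals. The monotonicity of the rejection process sidesteps the need for a sophisticated progress measure, however, so the corollary ultimately reduces to a short, linear composition of the five claims above, with no additional combinatorial work required.
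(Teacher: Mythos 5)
Your proof is correct and follows essentially the same route as the paper's: both reduce the corollary to a composition of Claims~\ref{le:dummy_women}, \ref{le:stable_max_all_women} and \ref{le:cover_is_stable}, the only difference being that the paper dispatches the "algorithm reaches line~\ref{li:M}" step by observing that the dummy edges $uw_u$ can never be rejected (so Phase~1 always terminates with every man holding an active edge), whereas you argue via monotonicity of rejections. Note only that your stated progress measure is slightly imprecise---a Phase~1 proposal round need not reject any edge---but the intended argument (men move strictly down their finite lists, rejected edges stay rejected, and each return from Phase~2 rejects at least one active edge) is standard and closes this immediately.
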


\begin{proof}
	Since the edges between men and their dummy partners cannot be rejected, the algorithm will proceed to line~\ref{li:M}. Courtesy of Claim~\ref{le:stable_max_all_women}, the output $M$ covers all women who have ever received a proposal. According to Claim~\ref{le:cover_is_stable}, this matching is stable, and thus we output a stable matching of~$\I$.
\end{proof}

\section{Super-stability}
\label{se:super}

In super-stability, an edge outside of $M$ blocks $M$ if \emph{neither} of its end vertices \emph{prefer} their matching edge to it.

\begin{defi}[blocking edge for super-stability]
Edge $uw$ blocks $M$, if 
\begin{enumerate}
	\item $uw \notin M$;
	\item $ w \prec_u M(u)$ or $ w \sim_u M(u)$;
	\item $ u \prec_w M(w)$ or $ u \sim_w M(w)$.
\end{enumerate}
\end{defi}

The set of already investigated problems is most remarkable for super-stability, see Table~\ref{ta:all}. Up to posets on both sides, a polynomial algorithm is known to decide whether a stable solution exists~\cite{Irv94, Man13}. Even though it is not explicitly written there, a blocking edge in the super stable sense is identical to the definition of a blocking edge given in~\cite{FGK16}. It is shown there that if one vertex class has strictly ordered preference lists and the other vertex class has arbitrary relations, then determining whether a stable solution exists is $\NP$-complete, but if the second class has asymmetric lists, then the problem becomes tractable.

We first show that a polynomial algorithm exists up to partially ordered relations on one side and asymmetric relations on the other side. Our algorithm can be seen as an extension of the one in~\cite{FGK16}. Our added contributions are a more sophisticated proposal routine and the condition on stability in the output. These are necessary as men are allowed to have acyclic preferences instead of strictly ordered lists, as in~\cite{FGK16}. Finally, we prove that acyclic relations on both sides make the problem hard.

\begin{restatable}{theorem}{thsuppol}
\label{th:sup_pol}
If one side has posets as preferences, while the other side has asymmetric pairwise preferences, then deciding whether the instance admits a super stable matching can be done in $\Ord{n^2m}$ time. 
\end{restatable}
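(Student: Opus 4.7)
The plan is to adapt the FGK~\cite{FGK16} super-stable algorithm for strict/asymmetric preferences by replacing its proposal rule with one that respects posets on the men's side, mirroring the structure of Algorithm~\ref{alg:strong_SMTAs}. After adding a dummy bottom-choice woman to each man and marking all edges inactive, a proposal phase lets every man with no active edges simultaneously offer to the \emph{heads} (maximal unrejected elements) of his poset; each incoming proposal $uw$ causes $w$ to reject every $u'w$ with $u' \prec_w u$. The rejection rule is purely strict because women have asymmetric (in particular, indifference-free) preferences, so no ``incomparable-partner'' rejection cascades as in the strong-stability algorithm are needed. When a man loses his last active edge, he re-enters the proposal phase with the newly exposed heads of his poset. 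The two phases alternate until either some man exhausts his poset (so no super-stable matching exists) or the active-edge graph stabilizes.

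Once the graph stabilizes, the algorithm tries to extract a super-stable matching from the active edges. This is where the second modification is needed: a head of $u$'s poset is not unique, so $u$ may still carry several incomparable active edges $uw$, $uw'$, etc., and not every maximum matching on the active subgraph is super-stable. I would output a man-covering matching $M$ from this subgraph and verify, for each pair $(u,w) \in M$, that no active edge $u'w$ with $u' \sim_w u$ has $u'$ matched by $M$ to some $w'$ with $w \prec_{u'} w'$ or $w \sim_{u'} w'$, since such a configuration would be a super-blocking pair. If verification fails on some $(u,w)$, the algorithm prunes the offending edge, re-runs the proposal/rejection phases, and iterates; if no super-stable matching exists at all, some man is eventually orphaned and the algorithm reports failure.

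Correctness would follow the template of Claims~\ref{le:stable_edges}--\ref{le:stable_max_all_women}: no super-stable edge is ever rejected, because the first rejection of such an edge must be triggered by a strict preference on the women's side, which immediately yields a blocking edge for the stable matching and contradicts stability. Hence if a super-stable matching $M^*$ exists, every edge of $M^*$ survives in the final active graph, a man-covering matching exists there by Hall's condition applied to $M^*$, and the output verification passes because $M^*$ itself satisfies it. Conversely, if the algorithm outputs $M$, the proposal invariant ensures that $u$ does not strictly prefer any other woman to $M(u)$ (since he stopped at her head), the rejection rule rules out women preferring a non-matched neighbor strictly, and the verification step closes off the remaining incomparable blockers.

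The main obstacle I expect is controlling the interaction between the poset heads on one side and the asymmetric cycles on the other inside the verification loop: a naive implementation could repeatedly verify, prune, and restart, blowing up the running time. I would bound each outer iteration by a strict decrease in the active-edge count (so at most $\Ord{m}$ iterations in the worst case), and implement each iteration in $\Ord{nm}$ time (covering head maintenance via topological bookkeeping in the Hasse diagram, rejection propagation, a bipartite maximum matching, and the verification sweep). A tighter amortized analysis showing that only $\Ord{n}$ rebuilds of the matching can actually occur before a man is orphaned delivers the claimed $\Ord{n^2 m}$ total running time.
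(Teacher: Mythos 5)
Your plan diverges from the paper's in a way that introduces a real gap. The central misstep is the sentence claiming the rejection rule is ``purely strict because women have asymmetric (in particular, indifference-free) preferences, so no incomparable-partner rejection cascades are needed.'' Asymmetric preferences forbid only the declared-indifference relation $a\,||_v\,b$; they still allow the incomparability relation $a \sim_v b$ (and cycles). For super-stability, $u \sim_w M(w)$ already satisfies the woman's half of the blocking condition, so incomparability at the women's side must be handled during the proposal phase, not ignored. The paper's algorithm is built around exactly this: a woman accepts a proposal only if it is \emph{strictly preferred to every proposal she has ever received}, and she breaks and deletes an existing engagement as soon as a proposal arrives that is preferred \emph{or incomparable} to it. Because you drop this, you are forced into the verify-and-prune loop at the end, and that loop is where your correctness argument fails: when you find $uw \in M$ and an active $u'w$ with $u' \sim_w u$ and $u'$ not strictly preferring $M(u')$ to $w$, it is not determined which edge is ``offending,'' and pruning either one can delete an edge of a super-stable matching. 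Your stated invariant (``no super-stable edge is ever rejected, because the first rejection of such an edge must be triggered by a strict preference on the women's side'') simply does not cover rejections issued by the verification loop, so the backward direction of the proof collapses.

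There are two further, smaller problems. First, your architecture imports the strong-stability machinery (maximum matching on the active graph, Hall's condition) that the paper does not need for super-stability: in the paper men may end up holding \emph{several} engagements, and the output test is just whether the engagement set is itself a matching covering every woman who ever received a proposal; showing it is a matching when a stable matching exists is a separate lemma (Claim~\ref{cl:sust}) with a cycle-chasing argument you would still need. Second, the running-time claim is not established: $\Ord{m}$ outer iterations at $\Ord{nm}$ each gives $\Ord{nm^2}$, not $\Ord{n^2m}$, and the ``tighter amortized analysis'' that would rescue the bound is asserted rather than proved. The paper's single-pass proposal scheme avoids the outer loop entirely, paying $\Ord{n^2}$ per proposal for the relation lookups and $\Ord{m}$ proposals in total.
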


We prove this theorem by designing an algorithm that produces a stable matching or a proof for its nonexistence, see 
Algorithm~\ref{alg:super_stable_poset_assymetric}. We assume men to have posets as preferences and women to have asymmetric relations. We remark that non-empty posets always have a non-empty set of \emph{maximal elements}: these are the ones that are not dominated by any other element. Women in the set of maximal elements are called \emph{maximal} women.

At start, an arbitrary man proposes to one of his maximal women. An offer from $u$ is temporarily accepted by $w$ if and only if $u \prec_w u'$ for every man $u' \neq u$ who has ever proposed to~$w$. This rule forces each woman to reproof her current match every time a new proposal arrives. Accepted offers are called \emph{engagements}. The proposal edges or engagements not meeting the above requirement are immediately deleted from the graph. Each man then reexamines the poset of women still on his list. If any of the maximal women is not holding an offer from him, then he proposes to her. The process terminates and the output is generated when no man has maximal women he has not proposed to. Notice that while women hold at most one proposal at a time, men might have several engagements in the output.

\begin{algorithm}
    \caption{Super stable matching with posets and asymmetric relations}
    \textbf{Input}: $\I=(U, W, E, \mathcal{R}_{U}, \mathcal{R}_{W})$; $\mathcal{R}_{U}$: posets, $\mathcal{R}_{W}$: asymmetric.
    
    \begin{algorithmic}[1]
        \setcounter{ALG@line}{\value{myalgcounter}}
        
        \While{there is a man $u$ who has not proposed to a maximal woman $w$} \label{li:super:man_to_propose}
            \State $u$ proposes to $w$ \label{li:super:man_to_propose_2}
            \If{$u \prec_w u'$ for all $u' \in U$ who has ever proposed to $w$} \label{li:super:previous_proposes}
                \State $w$ accepts the proposal of $u$, $uw$ becomes an engagement \label{li:super:accept}
            \Else
                \State $w$ rejects the proposal and deletes $uw$ \label{li:super:reject}
            \EndIf
            \If{$w$ had a previous engagement to $u'$ and $u \prec_w u'$ or $u \sim_w u'$} \label{li:super:previous_engagement}
                \State $w$ breaks the engagement to $u'$ and deletes $u'w$ \label{li:super:previous_engagement_break}
            \EndIf
        \EndWhile \medskip
        
        \State let $M$ be the set of engagements \label{li:super:engagement_graph}
        \label{li:super:}
        \If{$M$ is a matching that covers all women who have ever received a proposal} \label{li:super:engagement_graph_cond}
            \State STOP, OUTPUT $M$ and ``$M$ is a super stable matching.'' \label{li:super:solution}
        \Else
            \State STOP, OUTPUT ``There is no super stable matching.'' \label{li:super:no_solution}
        \EndIf
        
    \end{algorithmic}
    
    \label{alg:super_stable_poset_assymetric}
\end{algorithm}

The correctness and time complexity of our algorithm is shown in the Appendix, where we prove that the set of engagements $M$ is a matching that covers all women who ever received a proposal if and only if the instance admits a stable matching. 

\begin{restatable}{theorem}{thststnp}
\label{th:st_st_np}
If both sides have acyclic pairwise preferences, then determining whether a super stable matching exists is $\NP$-complete, even if each agent finds at most four other agents acceptable.
\end{restatable}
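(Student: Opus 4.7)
The plan is to prove the theorem by a polynomial reduction from a bounded-occurrence satisfiability problem, for instance One-in-Three 3-SAT with each variable occurring in at most three clauses, which is known to be $\NP$-complete. Membership in $\NP$ is immediate: given a matching $M$, one checks every non-matching edge against the super-stability blocking conditions in polynomial time. The reduction must produce an instance of the super stable marriage problem with acyclic preferences on both sides and with each agent having at most four acceptable partners.

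First I would design a \emph{variable gadget} for each variable $x$, consisting of a short even cycle of men and women whose acyclic preferences admit exactly two super-stable local matchings: one that exposes a designated ``true'' output edge and one that exposes a ``false'' output edge. Then I would design a \emph{clause gadget} that takes three literal edges coming from the corresponding variable gadgets and admits a super-stable extension precisely when the intended semantics of the clause (exactly one or at least one literal true, depending on the source problem) are satisfied. The key resource in both gadgets is the incomparability $a\sim_v c$ that arises in acyclic preferences from a chain $a\prec_v b\prec_v c$ with no declared direct comparison between $a$ and $c$: this incomparability is treated by super-stability as a weak tie and allows an edge to become blocking where, in a poset, transitivity would have ruled it out. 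Exploiting this phenomenon is essential, since for posets on both sides the problem is polynomial by Irving~\cite{Irv94,Man13}, so the reduction must genuinely use the weakness of acyclic preferences relative to posets.

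The main technical obstacle is the clause gadget: it must be designed so that, if all three input literal edges are forced unmatched by the variable gadgets, then every attempted local matching contains some edge that is blocked due to an acyclicity-induced incomparability, whereas a single matched literal edge suffices to ``absorb'' the gadget and leave a super-stable completion. Verifying this will require a careful case analysis over the $\Ord{1}$ possible local matchings of the gadget, making sure that no unintended super-stable matching slips through. A secondary obstacle is the degree bound of four: any variable appearing in several clauses would naively yield a high-degree vertex, so I would replace each such vertex by a chain of ``copy'' vertices, each of degree at most four, whose internal preferences force all copies to behave identically in every super-stable matching. Once the gadgets are in place, correctness follows by the two standard directions: a satisfying assignment induces a super-stable matching by composing the intended sub-matchings, and conversely, any super-stable matching restricts to a valid sub-matching on each variable gadget, thereby defining an assignment that satisfies every clause.
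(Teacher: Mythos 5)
Your high-level strategy coincides with the paper's: it also reduces from a bounded-occurrence SAT variant (specifically \textsc{(2,2)-e3-sat}), uses a small even cycle ($t,\bar x,f,x$) as the variable gadget with exactly two admissible local matchings, attaches three literal edges to a clause gadget (a $K_{3,3}$ with interconnecting edges on one side), and hinges on exactly the phenomenon you identify --- an acyclicity-induced incomparability such as $f\prec_x t$, $t\prec_x u$, $u\sim_x f$, which a poset would forbid. So the plan is sound in outline. The problem is that the proposal stops precisely where the proof has to begin. You never write down the preference relations of either gadget, and you explicitly defer the two claims that constitute the entire technical content: that the clause gadget admits a super-stable completion if and only if a literal edge is dominated on the variable side, and that no unintended super-stable matching ``slips through.'' For this theorem those verifications are not routine bookkeeping --- the paper needs a specific interleaving of strict sublists and incomparabilities at the $u_i$ and $w_j$ vertices (e.g.\ $w_1\prec x$, $w_2\prec x$, $x\sim w_3$ together with three cyclically shifted strict orders on the women's side) to ensure that exactly one interconnecting edge per clause is incomparable to the matching and must therefore be dominated at its variable gadget. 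Without the concrete relations, the claim that such a gadget exists is an assertion, not a proof.

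Two further concrete weaknesses. First, your choice of One-in-Three 3-SAT forces you to encode ``exactly one literal true,'' which is a strictly harder gadget to build than the ``at least one'' semantics the paper needs; the paper's forward direction simply matches one true literal to $w_3$ and pairs the rest arbitrarily among three interchangeable perfect matchings of the $K_{3,3}$. Second, the degree bound of four should not be patched afterwards with copy chains: your copy gadget is itself an unconstructed component whose claimed property (all copies behave identically in every super-stable matching with acyclic preferences) would need its own case analysis, and it risks inflating degrees or introducing spurious stable matchings. The paper avoids the issue entirely because \textsc{(2,2)-e3-sat} gives each of $x$ and $\bar x$ exactly two occurrence edges plus two cycle edges, and each clause literal vertex three gadget edges plus one interconnecting edge --- degree four everywhere by construction. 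I would recommend switching the source problem accordingly and then actually exhibiting and verifying the preference relations.
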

\section{Conclusion and open questions}
We completed the complexity study of the stable marriage problem with pairwise preferences. Despite of the integrity of this work, our approach opens the way to new research problems. 

The six degrees of orderedness can be interpreted in the non-bipartite stable roommates problem as well. For strictly ordered preferences, all three notions of stability reduce to the classical stable roommates problem, which can be solved in $\mathcal{O}(m)$ time~\cite{Irv85}. The weakly stable variant becomes $\NP$-complete already if ties are present~\cite{Ron86}, while the strongly stable version can be solved with ties in polynomial time, but it is $\NP$-complete for posets. The complexity analysis of these cases is thus complete. Not so for super-stability, for which there is an $\mathcal{O}(m)$ time algorithm for preferences ordered as posets~\cite{IM02}, while the case with asymmetric preferences was shown here to be $\NP$-complete for bipartite instances as well. We conjecture that the intermediate case of acyclic preferences is also polynomially solvable and the algorithm of Irving and Manlove can be extended to it. 

The  Rural Hospitals Theorem~\cite{GS85} states that the set of matched vertices is identical in all stable matchings. It has been shown to hold for strongly and super stable matchings~\cite{IMS00,Man02} and fail for weak stability, if preferences contain ties---even for non-bipartite instances. We remark that these results carry over even to the most general pairwise preference setting. To see this, one only needs to sketch the usual alternating path argument: assume that there is a vertex $v$ that is covered by a stable matching $M_1$, but left uncovered by another stable matching~$M_2$. Then, $M_1(v)$ must strictly prefer its partner in $M_2$ to $v$, otherwise edge $vM_1(v)$ blocks~$M_2$. Iterating this argument, we derive that such a $v$ cannot exist. The Rural Hospitals Theorem might indicate a rich underlying structure of the set of stable matchings. Such results were shown in the case of preferences with ties. Strongly stable matchings are known to form a distributive lattice~\cite{Man02}, and there is a partial order with $\mathcal{O}(m)$ elements representing all strongly stable matchings~\cite{KPG16}. However, once posets are allowed in the preferences, the lattice structure falls apart~\cite{Man02}. The set of super stable matchings has been shown to form a distributive lattice if preferences are expressed in the form of posets~\cite{Man02,Spi95}. The questions arise naturally: does this distributive lattice structure carries over to more advanced preference structures in the super stable case? Also, even if no distributive lattice exists on the set of strongly stable matchings, is there any other structure and if so, how far does it extend in terms of orderedness of preferences?

\bibliography{mybib}

\begin{thebibliography}{10}

\bibitem{Abr03}
D.~J. Abraham.
\newblock Algorithmics of two-sided matching problems.
\newblock Master's thesis, University of Glasgow, Department of Computing
  Science, 2003.

\bibitem{AR96}
A.~Alesina and H.~Rosenthal.
\newblock A theory of divided government.
\newblock {\em Econometrica}, 64(6):1311--1341, 1996.

\bibitem{ABFGHMR17}
H.~Aziz, P.~Bir{\'o}, T.~Fleiner, S.~Gaspers, R.~de~Haan, N.~Mattei, and
  B.~Rastegari.
\newblock Stable matching with uncertain pairwise preferences.
\newblock In {\em Proceedings of the 16th Conference on Autonomous Agents and
  MultiAgent Systems}, pages 344--352. International Foundation for Autonomous
  Agents and Multiagent Systems, 2017.

\bibitem{BKS03}
P.~Berman, M.~Karpinski, and A.~D. Scott.
\newblock Approximation hardness of short symmetric instances of {MAX-3SAT}.
\newblock Electronic Colloquium on Computational Complexity Report, number 49,
  2003.

\bibitem{Bir17}
P.~Bir{\'o}.
\newblock Applications of matching models under preferences.
\newblock {\em Trends in Computational Social Choice}, page 345, 2017.

\bibitem{BK15}
P.~Bir{\'o} and S.~Kiselgof.
\newblock College admissions with stable score-limits.
\newblock {\em Central European Journal of Operations Research},
  23(4):727--741, 2015.

\bibitem{BMM10}
P.~Bir\'o, D.~F. Manlove, and S.~Mittal.
\newblock Size versus stability in the marriage problem.
\newblock {\em Theoretical Computer Science}, 411:1828--1841, 2010.

\bibitem{Bla03}
P.~Blavatsky.
\newblock Content-dependent preferences in choice under risk: heuristic of
  relative probability comparisons.
\newblock {\em IIASA Interim Report}, IR-03-031, 2003.

\bibitem{Chen2012mip}
L.~Chen.
\newblock University admission practices -- {I}reland, {MiP} country profile 8.
\newblock
  \url{http://www.matching-in-practice.eu/higher-education-in-ireland/}, 2012.

\bibitem{Con85}
M.~{Condorcet}.
\newblock {\em Essai sur l'application de l'analyse \`a la probabilit\'e des
  d\'ecisions rendues \`a la pluralit\'e des voix}.
\newblock L'Imprimerie Royale, 1785.

\bibitem{FGK16}
L.~Farczadi, K.~Georgiou, and J.~K{\"o}nemann.
\newblock Stable marriage with general preferences.
\newblock {\em Theory of Computing Systems}, 59(4):683--699, 2016.

\bibitem{Fis99}
P.~Fishburn.
\newblock Preference structures and their numerical representations.
\newblock {\em Theoretical Computer Science}, 217:359--383, 1999.

\bibitem{FIM07}
T.~Fleiner, R.~W. Irving, and D.~F. Manlove.
\newblock Efficient algorithms for generalised stable marriage and roommates
  problems.
\newblock {\em Theoretical Computer Science}, 381:162--176, 2007.

\bibitem{GS62}
D.~Gale and L.~S. Shapley.
\newblock College admissions and the stability of marriage.
\newblock {\em American Mathematical Monthly}, 69:9--15, 1962.

\bibitem{GS85}
D.~Gale and M.~Sotomayor.
\newblock Some remarks on the stable matching problem.
\newblock {\em Discrete Applied Mathematics}, 11:223--232, 1985.

\bibitem{Hal35}
P.~Hall.
\newblock On representatives of subsets.
\newblock {\em Journal of the London Mathematical Society}, 10:26--30, 1935.

\bibitem{Hum01}
S.~Humphrey.
\newblock Non-transitive choice: Event-splitting effects or framing effects?
\newblock {\em Economica}, 68(269):77--96, 2001.

\bibitem{Irv85}
R.~W. Irving.
\newblock An efficient algorithm for the ``stable roommates'' problem.
\newblock {\em Journal of Algorithms}, 6:577--595, 1985.

\bibitem{Irv94}
R.~W. Irving.
\newblock Stable marriage and indifference.
\newblock {\em Discrete Applied Mathematics}, 48:261--272, 1994.

\bibitem{IM02}
R.~W. Irving and D.~F. Manlove.
\newblock The stable roommates problem with ties.
\newblock {\em Journal of Algorithms}, 43:85--105, 2002.

\bibitem{IMS00}
R.~W. Irving, D.~F. Manlove, and S.~Scott.
\newblock The {H}ospitals / {R}esidents problem with ties.
\newblock In M.~M. Halld{\'{o}}rsson, editor, {\em Proceedings of SWAT '00: the
  7th Scandinavian Workshop on Algorithm Theory}, volume 1851 of {\em Lecture
  Notes in Computer Science}, pages 259--271. Springer, 2000.

\bibitem{IMS02}
R.~W. Irving, D.~F. Manlove, and S.~Scott.
\newblock Strong stability in the {H}ospitals / {R}esidents problem.
\newblock Technical Report TR-2002-123, University of Glasgow, Department of
  Computing Science, 2002.
\newblock Revised May 2005.

\bibitem{IMS03}
R.~W. Irving, D.~F. Manlove, and S.~Scott.
\newblock Strong stability in the {H}ospitals / {R}esidents problem.
\newblock In {\em Proceedings of STACS '03: the 20th Annual Symposium on
  Theoretical Aspects of Computer Science}, volume 2607 of {\em Lecture Notes
  in Computer Science}, pages 439--450. Springer, 2003.
\newblock Full version available as \cite{IMS02}.

\bibitem{KTHDP10}
T.~Kalenscher, P.~N. Tobler, W.~Huijbers, S.~M. Daselaar, and C.~M. Pennartz.
\newblock Neural signatures of intransitive preferences.
\newblock {\em Frontiers in Human Neuroscience}, 4:49, 2010.

\bibitem{KMMP07}
T.~Kavitha, K.~Mehlhorn, D.~Michail, and K.~E. Paluch.
\newblock Strongly stable matchings in time ${O}(nm)$ and extension to the
  hospitals-residents problem.
\newblock {\em ACM Transactions on Algorithms}, 3(2):15, 2007.

\bibitem{KPG16}
A.~Kunysz, K.~Paluch, and P.~Ghosal.
\newblock Characterisation of strongly stable matchings.
\newblock In {\em Proceedings of the Twenty-Seventh Annual ACM-SIAM Symposium
  on Discrete Algorithms}, pages 107--119. Society for Industrial and Applied
  Mathematics, 2016.

\bibitem{LP09}
L.~Lov{\'a}sz and M.~D. Plummer.
\newblock {\em Matching theory}, volume 367.
\newblock American Mathematical Soc., 2009.

\bibitem{Man02}
D.~F. Manlove.
\newblock The structure of stable marriage with indifference.
\newblock {\em Discrete Applied Mathematics}, 122(1-3):167--181, 2002.

\bibitem{Man13}
D.~F. Manlove.
\newblock {\em Algorithmics of Matching Under Preferences}.
\newblock World Scientific, 2013.

\bibitem{May54}
K.~O. May.
\newblock Intransitivity, utility, and the aggregation of preference patterns.
\newblock {\em Econometrica}, 22(1):1--13, 1954.

\bibitem{RLPC14}
I.~R{\'\i}os, T.~Larroucau, G.~Parra, and R.~Cominetti.
\newblock College admissions problem with ties and flexible quotas.
\newblock Technical report, SSRN, 2014.

\bibitem{Ron86}
E.~Ronn.
\newblock {\em On the complexity of stable matchings with and without ties}.
\newblock PhD thesis, Yale University, 1986.

\bibitem{Spi95}
B.~Spieker.
\newblock The set of super-stable marriages forms a distributive lattice.
\newblock {\em Discrete Applied Mathematics}, 58:79--84, 1995.

\bibitem{WWWHuffPost}
Huffington {P}ost 2016 {G}eneral {E}lection: {T}rump vs. {S}anders.
\newblock
  \url{https://elections.huffingtonpost.com/pollster/2016-general-election-trump-vs-sanders}.
\newblock Accessed 3 September 2018.

\bibitem{WWWUSelection}
Real{C}lear{P}olitics website. {G}eneral {E}lection: {T}rump vs. {S}anders.
\newblock
  \url{https://www.realclearpolitics.com/epolls/2016/president/us/general_election_trump_vs_sanders-5565.html}.
\newblock Accessed 3 September 2018.

\end{thebibliography}
\newpage
\section*{\hypertarget{app:appendix}{Appendix}}
\subsection*{Weak stability}
\thweaknp*

\begin{proof}
The $\NP$-complete problem we reduce to our problem is \textsc{(2,2)-e3-sat}~\cite{BKS03}. Its input is a Boolean formula $B$ in CNF, in which each clause comprises exactly 3 literals and each variable appears exactly twice in unnegated and exactly twice in negated form. The decision question is whether there exists a truth assignment satisfying~$B$.

When constructing graph $G$ to a given Boolean formula $B$, we keep track of the three literals in each clause and the two unnegated and two negated appearances of each variable. Each appearance is represented by an interconnecting edge, running between the corresponding variable and clause gadgets. The graphs underlying our gadgets resemble gadgets in earlier hardness proofs~\cite{BMM10}, but the preferences are designed specifically for our problem. Figure~\ref{fi:np_weakly} illustrates our construction, in particular, the preference relations in it.

\begin{figure}[htbp]

\begin{minipage}[h]{0.4\textwidth}
\vspace{3mm}
$$
\begin{array}{ll}
   	t: & \text{strict list: } x \prec \bar{x}\\
    f: & \text{strict list: } \bar{x} \prec x\\
    \hline
	x: & f \prec t, t \prec u, u \prec f\\
	\bar{x}: & t \prec f, t \prec u, u \prec f
\end{array}
$$
\end{minipage}
\begin{minipage}[h]{0.4\textwidth}
$$\begin{array}{ll}
	u_1: & \text{strict list: } w_3 \prec w_2 \prec \tilde{x}_{u_1} \prec w_1\\
	u_2: & \text{strict list: } w_3 \prec w_2 \prec \tilde{x}_{u_2} \prec w_1\\
	u_3: & \text{strict list: } w_3 \prec w_2 \prec \tilde{x}_{u_3} \prec w_1\\
    \hline
	w_1: & \emptyset\\
	w_2: & \emptyset\\
	w_3: & \emptyset
\end{array}$$
\end{minipage}


\begin{tikzpicture}[scale=0.8, transform shape]

\tikzstyle{vertex} = [circle, draw=black, scale=0.7]
\tikzstyle{specvertex} = [circle, draw=black, scale=0.7]
\tikzstyle{edgelabel} = [circle, fill=white, scale=0.5, font=\huge]
\pgfmathsetmacro{\d}{1.05}
\pgfmathsetmacro{\b}{2}

\node[vertex, label={[label distance=9mm]below:$x$}] (x) at (0, \d*3) {};
\node[vertex, label={[label distance=9mm]below:$\bar{x}$}] (ox) at (0, 0) {};
\node[vertex, label=left:$t$] (t) at (-\b*2, \d*3) {};
\node[vertex, label=left:$f$] (f) at (-\b*2, 0) {};

\draw [ultra thick] (x) -- node[edgelabel, near end] {1} (t);
\draw [ultra thick] (ox) -- node[edgelabel, near end] {1} (f);
\draw [ultra thick] (ox) -- node[edgelabel, near end] {2} (t);
\draw [ultra thick] (x) -- node[edgelabel, near end] {2} (f);

\foreach \from in {x, ox}{
   \draw [ultra thick, dotted] (\from) to[out=60,in=180, distance=0.4cm ] ($ (\from) + (4/3*\d,1/4*\b) $);
   \draw [ultra thick, dotted] (\from) to[out=-60,in=180, distance=0.4cm ] ($ (\from) + (4/3*\d,-1/4*\b) $);
    }
    
\path [->,draw,ultra thick] ($ (x) !.20! (t) $) to [out=270,in=135] ($ (x) !.15! (f) $);
\path [->,draw,ultra thick] ($ (ox) !.20! (f) $) to [out=90,in=-135] ($ (ox) !.15! (t) $);
\path [->,draw,ultra thick] ( $(x) !.20! ($ (x) + (4/3*\d,-2/4*\b) $)$) to [out=45,in=45, distance=5mm] ($ (x) !.1! (t) $);
\path [->,draw,ultra thick] ($ (x) !.25! ($ (x) + (4/3*\d,2/4*\b) $)$) to [out=135,in=45] ($ (x) !.20! (t) $);
\path [->,draw,ultra thick] ($ (x) !.15! (f) $) to [out=-45,in=-135] ($ (x) !.25! ($ (x) + (4/3*\d,-2/4*\b) $)$);
\path [->,draw,ultra thick] ($ (x) !.2! (f) $) to [out=-45,in=-45, distance=10mm] ($ (x) !.35! ($ (x) + (4/3*\d,2/4*\b) $)$);

\path [->,draw,ultra thick] ($ (ox) !.20! ($ (ox) + (4/3*\d,-2/4*\b) $)$) to [out=90,in=0] ($ (ox) !.1! (t) $);
\path [->,draw,ultra thick] ($ (ox) !.20! ($ (ox) + (4/3*\d,3/4*\b) $)$) to [out=90,in=45] ($ (ox) !.15! (t) $);
\path [->,draw,ultra thick] ($ (ox) !.1! (f) $) to [out=-90,in=-145, distance=4mm] ($ (ox) !.3! ($ (ox) + (4/3*\d,-2/4*\b) $)$);
\path [->,draw,ultra thick] ($ (ox) !.17! (f) $) to [out=-90,in=-45, distance=13mm] ($ (ox) !.35! ($ (ox) + (4/3*\d,2/4*\b) $)$);

\node[fill=white] (u1) at (0, -\d*1.6) {};

\end{tikzpicture}\hspace{8mm}
%
%
\begin{tikzpicture}[scale=0.8, transform shape]

\tikzstyle{vertex} = [circle, draw=black, scale=0.7]
\tikzstyle{specvertex} = [circle, draw=black, scale=0.7]
\tikzstyle{edgelabel} = [circle, fill=white, scale=0.5, font=\huge]
\pgfmathsetmacro{\d}{1.05}
\pgfmathsetmacro{\b}{2}

\node[vertex, label=left:$u_1$] (u1) at (0, -\d*3) {};
\node[vertex, label=left:$u_2$] (u2) at (0, -\d*6) {};
\node[vertex, label=left:$u_3$] (u3) at (0, -\d*9) {};

\node[vertex, label=right:$w_1$] (w1) at (\b*3, -\d*3) {};
\node[vertex, label=right:$w_2$] (w2) at (\b*3, -\d*6) {};
\node[vertex, label=right:$w_3$] (w3) at (\b*3, -\d*9) {};

\draw [ultra thick] (u1) -- node[edgelabel, near start] {4} (w1);
\draw [ultra thick] (u1) -- node[edgelabel, near start] {2} (w2);
\draw [ultra thick] (u1) -- node[edgelabel, near start] {1} (w3);

\draw [ultra thick] (u2) -- node[edgelabel, near start] {4} (w1);
\draw [ultra thick] (u2) -- node[edgelabel, near start] {2} (w2);
\draw [ultra thick] (u2) -- node[edgelabel, near start] {1} (w3);

\draw [ultra thick] (u3) -- node[edgelabel, near start] {4} (w1);
\draw [ultra thick] (u3) -- node[edgelabel, near start] {2} (w2);
\draw [ultra thick] (u3) -- node[edgelabel, near start] {1} (w3);

\path [double,draw,ultra thick] ($ (w1) !.20! (u1) $) to [out=-90,in=135] ($ (w1) !.13! (u3) $);
\path [double,draw,ultra thick] ($ (w2) !.15! (u1) $) to [out=230,in=130] ($ (w2) !.15! (u3) $);
\path [double,draw,ultra thick] ($ (w3) !.13! (u1) $) to [out=-135,in=90] ($ (w3) !.20! (u3) $);

\foreach \from in {u1, u2, u3}
   \draw [ultra thick, dotted] (\from) to[out=130,in=30, distance=0.8cm ] node[edgelabel, near start] {3} ($ (\from) + (-4/3*\d,0) $);
\end{tikzpicture}


\caption{A variable gadget to the left and a clause gadget to the right. Strict lists are to be found at $t$, $f$, and $u$-vertices, while the rest of the vertices have asymmetric relations. Interconnecting edges are dotted. The arrows point to the preferred edge, while double lines denote incomparability.}
\label{fi:np_weakly}
\end{figure}
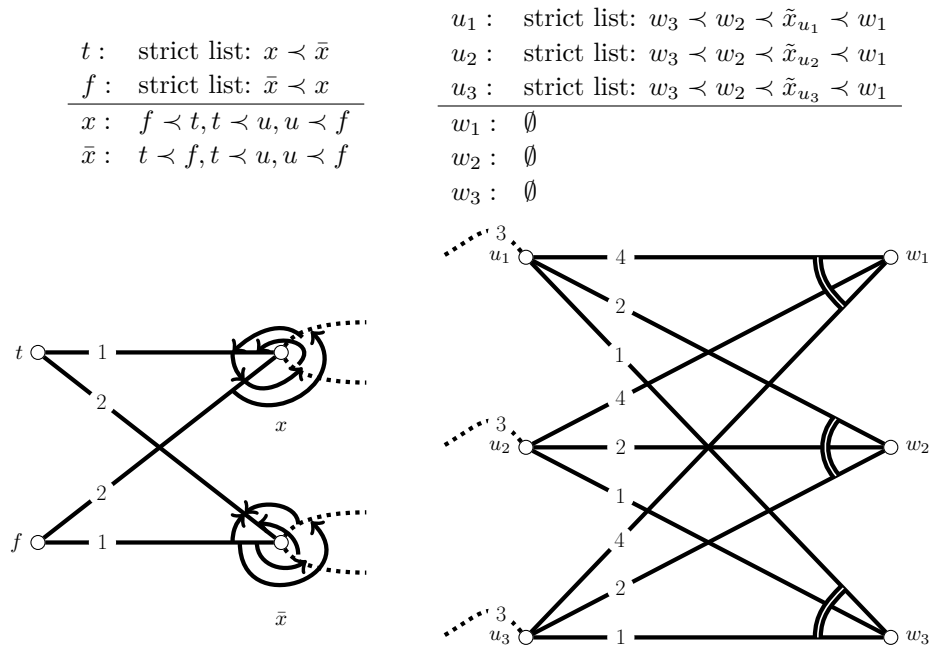

The variable gadget comprises a 4-cycle $t,\bar{x},f,x$ and four interconnecting edges, two of which are incident to $x$, and the remaining to are adjacent to~$\bar{x}$. These four edges are connected to $u$-vertices in clause gadgets. In each variable gadget, $x$  symbolizes the unnegated occurrences of the variable, while $\bar{x}$ stands for the negated occurrences.

The clause gadget consists of a complete bipartite graph on six vertices, where one side is equipped with interconnecting edges. This side represents the three literals in the clause. Each interconnecting edge runs to vertex $x$ or $\bar{x}$ in the variable gadget of the occurring unnegated or negated variable~$x$.

\begin{claim}
	If there is a weakly stable matching $M$ in $G$, then there is a truth assignment to~$B$.
\end{claim}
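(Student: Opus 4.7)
The strategy is to read a truth assignment off the internal matching of each variable gadget and then argue that every clause gadget forces the clause to be satisfied. Concretely, I would show that in any weakly stable matching $M$, the restriction to a variable gadget is one of exactly two configurations on the $4$-cycle $t,x,f,\bar{x}$: either $\{tx,\,f\bar{x}\}$ (interpreted as $x=\true$) or $\{xf,\,\bar{x}t\}$ (interpreted as $x=\false$). The sought truth assignment is defined accordingly.

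The first step is the internal analysis of a single variable gadget. Its key feature is that $x$'s three pairwise relations form a $3$-cycle: $f$ is preferred to $t$, $t$ to $u$, and $u$ to $f$. Hence $x$ has no best partner, and for any candidate value of $M(x)$ some incident edge is strictly preferred by $x$; stability must then come from the other endpoint. A case split on $M(x)\in\{t,f,u_i,\emptyset\}$, combined with the strict lists at $t$ (with $x$ on top) and $f$ (with $\bar{x}$ on top), and the linear order $t\prec_{\bar{x}} u\prec_{\bar{x}} f$ at $\bar{x}$, rules out every subcase except the two stated configurations: matching $x$ or $\bar{x}$ to an interconnecting $u_i$, or leaving either vertex unmatched, always produces an internal blocking edge of the form $xt$, $xf$, $\bar{x}t$, or $\bar{x}f$. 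I expect this case analysis to be the main obstacle, since one must simultaneously keep track of the cyclic preferences at $x$ and the three strict lists at $t$, $f$, and $\bar{x}$.

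A by-product of the case analysis is the following restriction on interconnecting edges. In the true configuration, $\bar{x}$ is matched to its worst partner $f$ and strictly prefers every interconnecting edge $\bar{x}u_i$ over $f$, so to avoid $\bar{x}u_i$ blocking, each corresponding $u_i$---i.e.\ each negated occurrence of $x$---must be matched to a partner that $u_i$ strictly prefers to $\bar{x}$; by the list $w_3\prec w_2\prec\bar{x}\prec w_1$ at $u_i$ this means $M(u_i)\in\{w_2,w_3\}$. Symmetrically, in the false configuration every unnegated occurrence of $x$ must satisfy $M(u_i)\in\{w_2,w_3\}$, this time via the cyclic relation that $x$ prefers $u$ over $f$.

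The second ingredient is the clause gadget. Since $w_2$ and $w_3$ lie at the top two positions of every $u_i$'s strict list while their own relations are empty, any unmatched $w_2$ or $w_3$ immediately yields a blocking edge with some $u_i$; hence $w_2$ and $w_3$ are always covered by $M$, absorbing two of the three $u$-vertices. The remaining ``spare'' vertex $u_j$ must also be matched, and since every variable gadget is in one of the two internal configurations, no $x$ or $\bar{x}$ is free to pair with $u_j$ via an interconnecting edge; hence $u_j$ is matched to $w_1$. To finish, observe that if $u_j$'s literal were false under the assignment, the restriction from the variable-gadget analysis would force $M(u_j)\in\{w_2,w_3\}$, contradicting that $u_j$ is the spare. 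Hence the spare literal is true in every clause, and the assignment satisfies $B$.
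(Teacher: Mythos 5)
Your proposal is correct and follows essentially the same route as the paper: establish that each variable gadget must be matched as $\{tx,f\bar{x}\}$ or $\{fx,t\bar{x}\}$ (hence no interconnecting edge is used), read off the assignment, and then argue that the $u$-vertex paired with $w_1$ in each clause gadget forces its literal to be true. The only cosmetic difference is that you reach the "one $u_j$ is stuck with $w_1$" conclusion via covering $w_2,w_3$ rather than via the paper's observation that $M$ is perfect on each clause gadget, and you should add the one-line justification that an unmatched spare $u_j$ together with the then-unmatched $w_1$ would form a blocking pair.
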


First we show that $t$ and $f$ must be matched in all stable matchings. If $t$ is unmatched, then both $x$ and $\bar{x}$ must be matched to a vertex to which $t$ is not preferred. The only such vertex is $f$, which leads to a contradiction with the matching property of~$M$. If $f$ is unmatched, then neither $x$ nor $\bar{x}$ is allowed to be matched to $t$, which we just showed to be impossible. Thus, any stable matching contains either $\left\{tx,f\bar{x} \right\}$ or $\left\{fx,t\bar{x} \right\}$ for each variable gadget. We set a variable to be true if $\left\{tx,f\bar{x} \right\} \in M$ and to false if $\left\{fx,t\bar{x} \right\} \in M$.

Another consequence of $M$ covering all $t$ and $f$ vertices, is that $M$ contains no interconnecting edge. From this follows that $M$ restricted to an arbitrary clause gadget must be a perfect matching.

The preferences in the clause gadgets are set so that out of the three interconnecting edges running to a clause gadget, exactly one dominates $M$ at the clause gadget, namely the edge incident to vertex $u_i$ paired up with~$w_1$. We know that $M$ is stable, therefore, this dominating interconnecting edge must be dominated by its other end vertex. This is only possible if the variable is set to true if the literal was unnegated, and to false if the literal was in negated form. Thus, we have found a satisfied literal in each clause.
\end{proof}

\begin{claim}
	If there is a truth assignment to $B$, then there is a stable matching $M$ in~$G$.
\end{claim}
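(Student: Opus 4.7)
The plan is to exhibit a matching $M$ explicitly from the given satisfying assignment and then verify directly that no edge of $G$ blocks it. For each variable $x$, I put $\{tx,f\bar x\}$ into $M$ if $x$ is set to true, and $\{fx,t\bar x\}$ otherwise. For each clause $c$, pick any literal that the assignment satisfies, let $u_k$ denote the corresponding vertex of $c$'s gadget, and add $u_kw_1$ together with any pairing of the remaining two $u$-vertices to $w_2$ and $w_3$. No interconnecting edge is placed into $M$. In particular $M$ covers every $t$, $f$, $u_i$ and $w_j$ vertex of the construction.

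The verification of weak stability then proceeds by splitting the possible blocking edges into three families. First, inside a variable gadget the only non-matching 4-cycle edges are handled by the strict preferences of $t$ and $f$: both are matched to their top choice, so neither $t\bar x$ nor $fx$ (resp.\ $tx$ nor $f\bar x$) can be strictly preferred by them to their matching edge. Second, inside a clause gadget the remaining candidate blocking edges are $u_iw_j$ with $u_i\neq u_k$; but each $w_j$ has the empty preference relation $\mathcal{R}_{w_j}=\emptyset$ and is matched, so no edge incident to $w_j$ is strictly preferred to another at $w_j$, and no such edge can block.

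The substantive case is that of the interconnecting edges $u_iy$ with $y\in\{x,\bar x\}$. I would split on whether the literal represented by this edge is satisfied. If it is satisfied, then $y$ is matched to $t$; reading off $\mathcal{R}_x$ and $\mathcal{R}_{\bar x}$ one sees $t\prec_y u$, so $y$ does not strictly prefer $u_i$ to its partner and the edge cannot block. If the literal is unsatisfied, then $y$ is matched to $f$; the relations $u\prec_x f$ and $u\prec_{\bar x} f$ show that $y$ does strictly prefer $u_i$ to $f$. Here the $u_i$ side rescues stability: because the literal is unsatisfied in $c$, $u_i$ was not chosen as $u_k$, hence $M(u_i)\in\{w_2,w_3\}$, and the strict list of $u_i$ places both $w_2$ and $w_3$ above $\tilde{x}_{u_i}=y$; thus $u_i$ does not strictly prefer $y$ to $M(u_i)$ and $u_iy$ is non-blocking.

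The main obstacle is the bookkeeping rather than any single clever step: the preferences at $x$ are cyclic, so transitivity cannot be used and every pairwise comparison must be read directly from $\mathcal{R}_x$. The point at which the hypothesis on $B$ enters is precisely the freedom to declare that some literal of every clause is satisfied, which allows the selection of $u_k$ that makes the case analysis in the previous paragraph go through.
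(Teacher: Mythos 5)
Your construction of $M$ is exactly the paper's (the satisfied literal's vertex goes to $w_1$, the truth value is encoded by choosing $\{tx,f\bar{x}\}$ versus $\{fx,t\bar{x}\}$), and your treatment of the clause-internal edges (empty relations at $w_1,w_2,w_3$) and of the interconnecting edges (case split on whether the literal is satisfied, using $t\prec_y u$ in one case and $M(u_i)\in\{w_2,w_3\}$ with $u_i$'s strict list in the other) is correct and in fact more explicit than the paper's own verification.

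There is, however, one step that fails as stated: the variable-gadget check. You claim that $t$ and $f$ are ``both matched to their top choice'' in either case, so that no non-matching cycle edge can be strictly preferred at the $t/f$ end. This holds only for a \emph{true} variable, where $M=\{tx,f\bar{x}\}$ gives $t$ and $f$ their first choices. For a \emph{false} variable, $M=\{fx,t\bar{x}\}$ matches $t$ to $\bar{x}$ and $f$ to $x$, which are their \emph{second} choices; the non-matching edges $tx$ and $f\bar{x}$ \emph{are} strictly preferred by $t$ and $f$ to their partners, and your stated reason for non-blocking evaporates. The edges still do not block, but for the opposite reason: $f\prec_x t$ means $x$ strictly prefers its partner $f$ to $t$, and $t\prec_{\bar{x}} f$ means $\bar{x}$ strictly prefers its partner $t$ to $f$, so in the false case it is the $x/\bar{x}$ endpoint that refuses. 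This is precisely what the paper's terse remark about the variable gadget being ``cyclic'' is hiding: around the 4-cycle every non-matching edge is rejected by at least one endpoint, but \emph{which} endpoint does the rejecting depends on the truth value. Replacing the ``top choice'' sentence with this two-sided check closes the gap and makes your argument coincide with the paper's.
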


In each variable gadget belonging to a true variable, $\left\{tx,f\bar{x} \right\}$ is chosen, whereas all gadgets corresponding to a false variable contribute the edges $\left\{fx,t\bar{x} \right\}$. In each clause, there is at least one true literal. We match the vertex representing the appearance of this literal to $w_1$ and match $w_2$ and $w_3$ arbitrarily.

No edge inside of a gadget blocks $M$, because it is a perfect matching inside each gadget and the preferences are either cyclic (variable gadget), or one side is indifferent (clause gadget). An interconnecting edge dominates $M$ at the clause gadget if and only if it corresponds to the chosen literal satisfying the clause. Our rules set exactly this literal to be satisfied in the variable gadget, i.e.~this literal is paired up with $t$, which is strictly preferred to the corresponding interconnecting edge.


\subsection*{Strong stability}

\medskip
{\color{black}
\textbf{Analysis and time complexity of Algorithms~\ref{alg:strong_SMTAs} and~\ref{alg:reject}.} \label{analysis:alg_strong}
We suppose that $G$ is represented by adjacency lists belonging to $|U| + |W| = n$ vertices and that there are $|E| = m$ acceptable edges. Since zero-degree vertices do no interfere with the existence or content of stable matchings, it may be assumed that each vertex has at least one edge, which results in $max\{|U|,|W|\} \leq m$, hence $n = |U| + |W| \leq 2m$ and $n=\mathcal{O}(m)$. Relations in $\mathcal{R}_U$ are lists with ties, hence they can be incorporated into the adjacency lists by using a delimiter symbol between ties. However, relations in $\mathcal{R}_W$ are to be represented as general relations with at most $\binom{|U|}{2} = \mathcal{O}(n^2)$ elements. The cost of the execution of the algorithm on an instance $\I$ is estimated by the number of accesses to the data structures representing neighbors of vertices and the relations between them.

Firstly, a lower bound of the size of input is provided by the size of the graph, as usual. Note that relations in $\mathcal{R}_W$ may be empty sets, so this is a sharp lower bound. Hence, the input size is $\Omega(n+m)$.

Secondly, non-trivial operations are to be committed on a data structure holding asymmetric relations. Our algorithm uses the following operation primitives: finding all men $u'$ such that $u \prec u'$ with respect to $R_w$ and rejecting $u'w$, finding edges incomparable to $uw$ with respect to $R_w$ and rejecting them. These primitives can take up as many as $\mathcal{O}(n^2)$ steps. Let us denote the maximum cost of any such primitive by $\xi$.

In order to decrease running time, all information regarding edges are to be maintained. More specifically, the state of an edge as being inactive, active or rejected is stored. Moreover, for every $u \in U$, we store the fact whether $u$ has been a vertex because of which in Algorithm~\ref{alg:reject} edges of type $u'w$ are rejected where $u' \sim_w u$. Reasonable work is spared if $u$ plays the same role again later.

Now, adding dummy women to the list of men is done in $\Ord{n}$ time in total. Besides, each edge is proposed along at most once and proposals are to be done in order of the adjacency list of men, so the total cost of proposals is $\Ord{m}$. Furthermore, beware that for a given edge $uw$, rejecting edges $u'w$ to whom $uw$ is strictly preferred, and rejecting incomparable edges $u'w$ are done at most once, each of them contributing a cost of $\xi$. The graph $G_A$ need not be constructed separately, since active edges are marked due to our previous considerations. Subsequently, apart from finding maximal matchings and critical sets in $G_A$, the cost of our algorithm is bounded by $\Ord{n + m + 2m\xi} = \Ord{m\xi}$.
 
As far as maximum matchings and critical sets are concerned, the well-founded technique described by Irving~\cite{Irv94} is reapplied here. As already stated previously, the critical set is calculated from a maximum matching by taking the uncovered men and all men reachable from the uncovered men via an alternating path. The standard algorithm for determining maximum matchings launches parallel BFS-algorithms from uncovered men to find augmenting paths. An interesting property of the execution is that whenever it finishes---because no alternating path was augmenting,---the critical set is computed as well. Therefore critical sets are automatically yielded with the use of the Hungarian method, for which one only needs to store the occurring vertices.

Although we could apply the Hungarian method in each execution of the second phase, we wish to reduce the cost of execution by storing information from previous iterations. Note that the Hungarian method commences from an arbitrary matching and augments that one. Let the augmentation start from the remnants of the maximum matching found in the previous iteration. Let $M_i$, $C_i$, $x_i, (i \geq 1)$ denote the maximum matching found in the $i^\text{th}$ iteration of the second phase, the critical set with respect to $M_i$, and the number of edges rejected between the $i^\text{th}$ and $(i+1)^\text{th}$ execution of the Hungarian method, respecticely. In the first iteration the augmenting path algorithm is executed from scratch taking $\Ord{|U|m} = \Ord{nm}$ time. After the $i^\text{th}$ iteration we reject $x_i$ edges. Since each man in $C_i$ had at least one edge in $G_A$, at least $(|U|-|C_i|)-(x_i-|C_i|)=|U|-x_i$ men are still paired to women via active edges, if that number is positive. In that case, the $(i+1)^\text{th}$ iteration starts BFS-algorithms from $x_i$ vertices. Let  $L$ be the total number of iterations, in $k$ of which $x_i \geq |U|$, i.e.~the augmenting path algorithm is run from scratch. The time complexity, therefore, is $\Ord{nm + kmn + m\sum_{L-k \text{ iter}} x_i}$, where the summation is done for the rest of $x_i$'s corresponding to the remaining $L-k$ iterations. The time complexity, in the other $k$ iterations $n \leq x_i$, therefore $kn + \sum_{L-k \text{ iter}} x_i \leq \sum_{i=1}^{L} x_i \leq m$, because not more than $m$ edges may be rejected and no edge is rejected more than once. Hence the running time related to maximum matchings and critical sets is $\Ord{nm + m \cdot (kn + \sum_{L-k \text{ iter}} x_i)} = \Ord{nm + m \cdot m} = \Ord{m^2}$.

In conclusion, the total time complexity of the algorithm is $\Ord{m\xi + m^2} = \Ord{mn^2 + m^2}$, while the size of the input is $\Omega(n+m)$.


\subsection*{Super-stability}

\begin{theorem}
The output of Algorithm~\ref{alg:super_stable_poset_assymetric} is a matching that covers all women who ever received a proposal if and only if the instance admits a stable matching. 
\end{theorem}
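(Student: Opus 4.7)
My plan is to adapt the framework of the strong-stability correctness proof (Claims~\ref{le:stable_edges}--\ref{le:stable_max_all_women}) to Algorithm~\ref{alg:super_stable_poset_assymetric}. The first step is to establish two invariants: (I) no edge of any super stable matching $M^*$ is ever deleted by the algorithm, and (II) every woman who ever receives a proposal must be matched in every super stable matching. Once these are in hand, both directions of the iff follow via blocking-edge case analyses.

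For (I), I would induct on deletion order. Let $uw \in M^*$ be the first deleted stable edge. Its deletion is triggered by another proposer $u'$ at $w$: at line~\ref{li:super:reject}, $u'$ proposed earlier with $u \not\prec_w u'$ (so either $u' \prec_w u$ or $u' \sim_w u$); at line~\ref{li:super:previous_engagement_break}, $u'$ is a later proposer with $u' \prec_w u$ or $u' \sim_w u$. In both cases $u'$ had proposed to $w$, so $w$ was maximal on $u'$'s poset at that moment. By induction the stable edge $u'M^*(u')$ still sits on $u'$'s list, and maximality of $w$ then forces $w \prec_{u'} M^*(u')$ or $w \sim_{u'} M^*(u')$; combined with the trigger relation at $w$, the edge $u'w$ becomes a super-blocker of $M^*$, contradicting stability. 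Invariant (II) follows from the same idea: an unmatched $w$ together with a proposer $u$ yields $uw$ as a super-blocker via the convention $u \prec_w \emptyset$ and the maximality-at-$u$ argument.

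Given (I) and (II), I reduce the two directions as follows. For $\Leftarrow$, assume a super stable $M^*$ exists and show the final engagement set equals $M^*$. By (II) every proposed woman is in $M^*$, and by (I) every $M^*$-edge survives. Each man ends with at most one engagement: a second engagement $uw'$ with $w' \neq M^*(u)$ would contradict stability of $M^*$, because survival of $uM^*(u)$ and maximality of $w'$ at proposal time deliver the $u$-side super-blocking condition, while $u$'s strictly-dominating status among proposers to $w'$ (needed for $uw'$ to be engaged) delivers the $w'$-side. Each proposed woman ends with at least one engagement: inspecting the final maximal set on $M^*(w)$'s list, the at-most-one-engagement fact forces this set to be $\{w\}$, so $M^*(w)$ did propose to $w$ and the engagement $M^*(w)w$ cannot be broken afterward without violating (I). For $\Rightarrow$, if the output $M$ covers all proposed women and $uw \notin M$ super-blocks $M$, I case-split on whether $uw$ has been deleted and whether $u$ ever proposed to $w$; each case contradicts either the acceptance rule at $w$ or the maximality rule at $u$.

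The main obstacle is the ``every proposed woman ends up engaged'' step in $\Leftarrow$: it is not self-evident that $M^*(w)$ ever proposes to $w$, because a dominating $w' \prec_{M^*(w)} w$ might linger on $M^*(w)$'s list throughout the execution, and the asymmetric relations on the women's side allow an engagement to be broken without being replaced (line~\ref{li:super:previous_engagement_break} can fire without line~\ref{li:super:accept}), so one cannot simply argue by monotonic improvement at $w$. The resolution closes the loop via the at-most-one-engagement fact: any maximal $w' \neq w$ lingering on $M^*(w)$'s final list would itself carry an engagement $M^*(w)w'$, and that extra engagement would produce a super-blocker against $M^*$---a contradiction that forces $w$ to be maximal on $M^*(w)$'s list at termination.
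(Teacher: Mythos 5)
Your invariants (I) and (II) are exactly the paper's Claims~\ref{cl:deleted_edge} and~\ref{cl:unmatched_woman}, and your treatment of the forward direction matches Claim~\ref{cl:super:output_is_stable}. The gap is in the backward direction, in the two places where you derive a contradiction from a single ``extra'' engagement. Suppose $uw'$ is an engagement with $w'\neq M^*(u)$. The $u$-side blocking condition is fine (survival of $uM^*(u)$ plus maximality of $w'$ at proposal time), but the $w'$-side condition is $u \prec_{w'} M^*(w')$ or $u\sim_{w'} M^*(w')$, and you obtain it from ``$u$'s strictly-dominating status among proposers to $w'$''. That only constrains $u$ against men who actually proposed to $w'$; the stable partner $M^*(w')$ need not be among them, and the interesting case is precisely $M^*(w') \prec_{w'} u$ with $M^*(w')$ never having proposed to $w'$ --- then $uw'$ does not block $M^*$ and no contradiction arises. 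At that point one is forced to continue: since $M^*(w')w'$ is never deleted yet never proposed along, $w'$ is never maximal for $M^*(w')$, so at termination $M^*(w')$ holds an engagement to some maximal $w''$ with $w''\prec_{M^*(w')} w'$ and $w''\neq M^*(M^*(w'))$, i.e.\ another ``extra'' engagement, and so on. The paper resolves this by following the resulting path that alternates between engagement edges and $M^*$-edges and showing it would have to close into an impossible cycle (Claims~\ref{cl:sucover} and~\ref{cl:sust}); your one-step contradiction does not substitute for that chain argument. The problem infects both halves of your $\Leftarrow$ direction, and your proposed fix for the coverage step explicitly leans on the at-most-one-engagement fact whose own proof contains the same hole, so the repair is also circular.

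Two smaller points: aiming to show that the final engagement set \emph{equals} $M^*$ is stronger than needed and is not in general what the algorithm guarantees when several super stable matchings exist (it suffices that the engagements form a matching covering all proposed women); on the other hand, the remainder of your outline (deletion safety, the fact that every proposed woman is matched in every stable matching, and the case analysis for stability of the output) is sound and coincides with the paper's argument.
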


\begin{claim}
\label{cl:super:output_is_stable}
	If the output of the algorithm is a matching that covers all women who ever received a proposal, then it is stable.
\end{claim}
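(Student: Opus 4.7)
The plan is to argue by contradiction. Suppose $uw \notin M$ is a super-blocking edge, so neither $u$ strictly prefers $M(u)$ to $w$, nor $w$ strictly prefers $M(w)$ to $u$ (with the convention $a \prec_v \emptyset$ covering unmatched vertices). I would split on whether $u$ ever proposes to $w$ during the execution of Algorithm~\ref{alg:super_stable_poset_assymetric}; in each case the contradiction is derived on a different side of the pair.

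\emph{If $u$ proposed to $w$}, then since $uw \notin M$ the edge must have been deleted at line~\ref{li:super:reject} or line~\ref{li:super:previous_engagement_break}. Write $\mu := M(w)$; this exists because $w$ received a proposal and the hypothesis of the claim matches every such woman. The core sub-lemma I would establish is: $\mu \prec_w u'$ for every man $u'$ who ever proposed to $w$. For $u'$ proposing before $\mu$ this is just the acceptance check at line~\ref{li:super:previous_proposes} applied when $\mu$ itself was accepted. For $u'$ proposing after $\mu$, I would use that $\mu$'s engagement survives to termination, so the break-condition of line~\ref{li:super:previous_engagement} did \emph{not} trigger on $u'$'s proposal; thus neither $u' \prec_w \mu$ nor $u' \sim_w \mu$ holds, and because $w$'s relation is asymmetric (there is no $\parallel$-case), the only remaining option is $\mu \prec_w u'$. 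Instantiating at $u' = u$ gives $\mu \prec_w u$, which rules out both $u \prec_w M(w)$ and $u \sim_w M(w)$ and hence contradicts the $w$-side of the blocking condition.

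\emph{If $u$ never proposed to $w$}, the termination condition of the main while loop forces every woman maximal in $u$'s current list to have already received a proposal from $u$. Hence $w$ cannot be maximal in $u$'s final poset, and transitivity produces a maximal element $w^*$ in that poset with $w^* \prec_u w$. Since $w^*$ is maximal at termination, $u$ proposed to $w^*$; since $uw^*$ is still in $u$'s final list, neither line~\ref{li:super:reject} nor line~\ref{li:super:previous_engagement_break} can have deleted it, so $uw^*$ is an engagement. Because $M$ is a matching, $M(u) = w^*$, whence $M(u) \prec_u w$; this contradicts both $w \prec_u M(u)$ and $w \sim_u M(u)$ on the $u$-side of the blocking condition.

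The main technical obstacle is the sub-lemma in the first case. Since $w$'s relation is only asymmetric and not transitive, one cannot chain comparisons along the sequence of successive engagements at $w$ to reach a direct comparison between $\mu$ and an arbitrary later proposer. The argument must exploit simultaneously the ``compare against \emph{all} past proposers'' clause of line~\ref{li:super:previous_proposes} and the ``break on $\prec$ \emph{or} $\sim$'' clause of line~\ref{li:super:previous_engagement}: together these exclude both bad relations $u' \prec_w \mu$ and $u' \sim_w \mu$ for any later proposer $u'$, and the trichotomy provided by asymmetry then forces $\mu \prec_w u'$.
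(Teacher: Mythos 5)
Your proof is correct and follows the same structure as the paper's: a case split on whether $u$ ever proposed to $w$, deriving the contradiction on the $w$-side in the first case and on the $u$-side (via transitivity and finiteness of the men's posets) in the second. Your sub-lemma that $M(w) \prec_w u'$ for every proposer $u'$ merely spells out the step the paper asserts in one line, and your direct jump to a maximal element $w^*$ is a tidier packaging of the paper's iterative descent; both are sound.
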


\begin{proof}
	Assume that an edge $uw$ blocks the output matching~$M$. We investigate two cases. \begin{itemize}
	    \item Man $u$ has proposed to $w$.\\
	    We know that $w$ got engaged to a man $M(w)$, for whom $M(w) \prec_w u$ holds. This contradicts our assumption on $uw$ being a blocking edge.
	    \item Man $u$ has not proposed to~$w$.\\
	   There must be an edge $uw'$ not deleted so that $w' \prec_u w$. For $uw$ blocks $M$, $w' \neq M(u)$, thus $uw'$ has not been proposed along. Therefore, there is another edge $uw''$ not yet deleted so that $w'' \prec_u w' \prec_u M(u)$. Due to the transitivity of relations on the men's side and the finiteness of the vertex set, the iteration of this argument leads to a contradiction.\qedhere 
	\end{itemize}
\end{proof}
    
    \noindent The opposite direction we prove in Claims~\ref{cl:deleted_edge} to~\ref{cl:sust}.
    \begin{claim}
    \label{cl:deleted_edge}
    	If an edge was deleted in the algorithm, then no stable matching contains it.
    \end{claim}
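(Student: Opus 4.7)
The plan is to proceed by induction on the order of deletions, taking $uw$ to be the first deleted stable edge and letting $M$ be a stable matching that contains it. The inductive hypothesis then guarantees that every edge deleted strictly before $uw$ is absent from $M$; in particular, when the critical rejection occurs, both $uM(u)$ and $wM(w)$ are still present in the current graph, so these matching partners are still available in the remaining lists of $u$ and of $w$. Deletions happen in exactly two lines of Algorithm~\ref{alg:super_stable_poset_assymetric}: line~\ref{li:super:reject} (a new proposal $uw$ is rejected) and line~\ref{li:super:previous_engagement_break} (a standing engagement $u'w$ is broken by a new proposal). I handle the two cases separately, producing in each an explicit edge that super-blocks~$M$.

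\textbf{First case.} Suppose $uw$ is deleted in line~\ref{li:super:reject}. Then some $u'$ has previously proposed to $w$ for which $u \prec_w u'$ fails; since $\mathcal{R}_w$ is asymmetric and thus contains no $||$-relation, this leaves $u' \prec_w u$ or $u' \sim_w u$. I claim that $u'w$ super-blocks~$M$. At the moment of $u'$'s earlier proposal, $w$ was maximal in $u'$'s remaining poset. Since $M(u') \neq w$ (otherwise $uw$ and $u'w$ would both be in $M$) and since $u'M(u')$ has not yet been deleted by the inductive hypothesis, $M(u')$ still sits in $u'$'s remaining list at that moment, so maximality of $w$ rules out $M(u') \prec_{u'} w$. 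Antisymmetry of the poset $\mathcal{R}_{u'}$ eliminates $w ||_{u'} M(u')$, leaving $w \prec_{u'} M(u')$ or $w \sim_{u'} M(u')$. Combined with $M(w) = u$ and $u' \prec_w u$ or $u' \sim_w u$, this establishes both super-blocking inequalities for~$u'w$.

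\textbf{Second case.} Suppose instead that $u'w$ is deleted in line~\ref{li:super:previous_engagement_break}, triggered by a fresh proposal $uw$ satisfying $u \prec_w u'$ or $u \sim_w u'$. I claim that $uw$ super-blocks~$M$. The trigger condition together with $M(w) = u'$ directly yields the second super-blocking inequality. For the first, $w$ was maximal for $u$ at the instant of the proposal, and $uM(u)$ had not yet been deleted by the inductive hypothesis (with $M(u) \neq w$ since $u'w \in M$); hence $M(u) \prec_u w$ fails, and antisymmetry of the poset $\mathcal{R}_u$ forces $w \prec_u M(u)$ or $w \sim_u M(u)$.

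The main obstacle is the bookkeeping in the first case: one must be sure that $u'M(u')$ had not already been removed from $u'$'s list before $u'$ proposed to $w$, which is exactly where the ``first deleted stable edge'' choice and the inductive hypothesis intervene. Antisymmetry on both sides---the poset condition for men and the asymmetry assumption for women---is the quiet ingredient that rules out the ``equally good'' relation $||$ and lets the two sub-cases within each case be combined cleanly.
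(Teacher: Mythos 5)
Your proof is correct and follows essentially the same route as the paper's: both hinge on the minimality of the first deleted stable edge combined with the fact that a man proposes to $w$ only once $w$ is maximal in his remaining poset, so his still-undeleted stable partner cannot dominate $w$. You phrase the final contradiction as a super-blocking edge against $M$ (and split the two deletion lines into separate cases), while the paper unifies the cases and instead derives an earlier deletion of the aggressor's stable edge, but these are contrapositive rearrangements of the same argument.
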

    \begin{proof}
    	Let $uw$ be the first edge deleted by the algorithm even though it is part of a stable matching~$S$. The reason of the deletion was that $w$ received an offer from $u'$ for which $u' \prec_w u$ or $u' \sim_w u$. Since $u'w \notin S$ does not block $S$, $u'$ is matched in $S$ and $S(u') \prec_{u'} w$. Due to the monotonicity of proposals, $u'$ had proposed to $S(u')$ before proposing to $w$, but $u'S(u')$ was deleted. This contradicts our assumption on $uw$ being the first deleted stable edge.
    \end{proof}
    
    \begin{claim}
    \label{cl:unmatched_woman}
    	If a woman $w$ has ever received a proposal in our algorithm, then $w$ must be matched in all stable matchings.
    \end{claim}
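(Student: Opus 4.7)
The plan is a proof by contradiction leveraging the previous claim that stable edges are never deleted. Suppose some woman $w$ has received a proposal yet is left uncovered by a stable matching $S$, and fix any man $u$ who proposed to $w$ during the algorithm. The goal is to derive a contradiction with Claim~\ref{cl:deleted_edge} by showing that $u$'s stable partner must have been deleted before $u$ ever proposed to $w$.

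The first step is to rule out the possibility that $u$ is unmatched in $S$: otherwise both endpoints of $uw$ would be uncovered, and since by convention any acceptable partner is strictly preferred to the empty set, all three super-stability blocking conditions would trivially hold for $uw$, contradicting the stability of $S$. So let $w^*:=S(u)$. The second step exploits that $uw$ itself does not block $S$. Condition~1 of the blocking definition holds because $w$ is uncovered in $S$, and condition~3 holds because $u \prec_w \emptyset = S(w)$. Hence condition~2 must fail, so neither $w \prec_u w^*$ nor $w \sim_u w^*$ holds; since the men's preferences form a poset and therefore exclude the $||$ relation, this forces $w^* \prec_u w$ strictly.

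The final step uses the proposal rule of Algorithm~\ref{alg:super_stable_poset_assymetric}. At the moment $u$ proposed to $w$, the vertex $w$ was a maximal element of $u$'s current poset, i.e., no remaining neighbour of $u$ strictly dominated $w$. Since $w^* \prec_u w$ would be exactly such a domination, $uw^*$ must have already been deleted prior to that proposal. But $uw^*=uS(u)$ is a stable edge, which by Claim~\ref{cl:deleted_edge} can never be deleted---the desired contradiction. I expect the only subtle point to be respecting the convention that $\prec_v$ denotes preference (so a dominated element is the ``smaller'' one under $\prec$), and verifying that on the men's side the $||$ relation genuinely cannot appear; the latter is guaranteed by the poset assumption on $\mathcal{R}_U$, so no further case analysis is needed.
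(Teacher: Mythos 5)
Your proof is correct and follows essentially the same route as the paper's: rule out $uw$ blocking $S$, deduce $S(u)\prec_u w$, and conclude that $uS(u)$ must have been deleted before $u$ could propose to $w$, contradicting Claim~\ref{cl:deleted_edge}. You merely spell out the intermediate steps (why $u$ is matched in $S$, and why the poset assumption excludes the $||$ case) that the paper leaves implicit.
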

    \begin{proof}
    	Assume that $uw$ carried a proposal at some point, yet $w$ is unmatched in a stable matching~$S$. In order to stop $uw$ from blocking $S$, $u$ is matched in $S$ and $S(u) \prec_u w$. This implies that $uS(u)$ was deleted before the proposal along $uw$ could be sent, which contradicts Claim~\ref{cl:deleted_edge}.
    \end{proof}
    
    \begin{claim}
\label{cl:sucover}
	If there is a stable matching $S$, then the set of engagements $M$ computed in line~\ref{li:super:engagement_graph} covers all women who have ever received a proposal.
\end{claim}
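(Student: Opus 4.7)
The plan is to assume, for contradiction, that some woman $w$ has received a proposal yet is uncovered by $M$, and to derive a contradiction with the stability of $S$. By Claims~\ref{cl:unmatched_woman} and~\ref{cl:deleted_edge}, the man $u^*:=S(w)$ exists and the edge $u^*w$ was never deleted. First I would argue that $u^*$ never proposed to $w$: an accepted proposal would leave $u^*w$ as an engagement that can only be ended by deleting the edge, so $w$ would be covered in $M$; and a rejected proposal would delete $u^*w$ at once. Consequently $w$ was never maximal in $u^*$'s reduced list. Using the poset structure on the men's side I would select a maximal element $\tilde{w}$ of $u^*$'s final reduced list satisfying $\tilde{w}\prec_{u^*}w$; such a $\tilde{w}$ is obtained by taking any $w'\prec_{u^*}w$ still present at termination and walking up to a maximum, which remains strictly preferred to $w$ by transitivity.

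By the termination condition $u^*$ proposed to $\tilde{w}$, and since $u^*\tilde{w}$ was never deleted the proposal was accepted and the ensuing engagement persisted, so $u^*\tilde{w}\in M$. The stability of $S$ combined with $\tilde{w}\prec_{u^*}w=S(u^*)$ forces $S(\tilde{w})\prec_{\tilde{w}}u^*$. I would then compare the two possible temporal orderings of the proposals made to $\tilde{w}$ by $u^*$ and by $S(\tilde{w})$: if $S(\tilde{w})$ proposed first, the later proposal of $u^*$ would be rejected by the test in line~\ref{li:super:previous_proposes}; if $S(\tilde{w})$ proposed after $u^*$, the engagement $u^*\tilde{w}$ would be broken and deleted by line~\ref{li:super:previous_engagement_break}. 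Both alternatives contradict $u^*\tilde{w}\in M$, so $S(\tilde{w})$ never proposed to $\tilde{w}$ either.

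This sets up an infinite descent. Writing $u_0:=u^*$ and $w_0:=w$, define recursively $w_{i+1}:=\tilde{w_i}$ (chosen inside $u_i$'s final reduced list as above) and $u_{i+1}:=S(w_{i+1})$; the argument of the preceding paragraph reapplied to $(u_{i+1},w_{i+1})$ extends the recursion indefinitely and yields $u_{i-1}w_i\in M$ for every $i\geq 1$. By the finiteness of the vertex set, pick the minimal $i$ with $u_i=u_\ell$ for some $\ell<i$. The hypothesis that $w$ is uncovered in $M$ rules out $\ell=0$, since otherwise $w_i=w_0=w$ and the engagement $u_{i-1}w\in M$ would cover $w$; hence $\ell\geq 1$, and the minimality of $i$ forces $u_{\ell-1}\neq u_{i-1}$. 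But $w_\ell=S(u_\ell)=S(u_i)=w_i$, so $w_\ell$ would hold the two distinct engagements $u_{\ell-1}w_\ell$ and $u_{i-1}w_\ell$ in $M$ at the same time, contradicting the algorithmic invariant that each woman carries at most one engagement. The hard part will be the second paragraph's case analysis on the order of the two proposals, and verifying that the chosen $\tilde{w_i}$ produces a genuine repeat of the index sequence so that the double-engagement contradiction closes.
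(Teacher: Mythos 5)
Your proof is correct and takes essentially the same route as the paper's: you trace the alternating sequence of $S$-edges and engagements starting from the uncovered woman $w$, show at each step that the $S$-partner never proposed so the recursion continues, and close with the same finiteness argument that the cycle can end neither at $w$, nor at a repeated man, nor at a woman holding two engagements. Your write-up merely makes explicit two points the paper leaves implicit, namely the poset-based selection of the maximal woman $\tilde{w}$ and the temporal case analysis on the order of proposals.
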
 
\begin{proof}
	Assume that woman $w$ has received a proposal, but she is not covered in~$M$. Claim~\ref{cl:unmatched_woman} shows that $w$ is matched in $S$, while Claim~\ref{cl:deleted_edge} proves that $uw \in S$ was not proposed along. The latter implies that $u$ has at least one engagement edge in~$M$. 
	For the same reason, $w$ is not preferred to $M(u)$ by $u$ for all $uM(u) \in M$. To stop $uM(u)$ from blocking $S$, $M(u)$ must have a partner in $S$ who is preferred to~$u$. This edge obviously never carried a proposal, otherwise $uM(u)$ could not be in~$M$. We iterate this argument until the cycle closes. This cannot happen \begin{inparaenum}[1)]
	    \item at an $S$-edge running to an already visited vertex in $U$, because $S$ is a matching;
	    \item at an $M$-edge running to an already visited vertex in $W\setminus w$, because women keep at most one proposal edge;
	    \item at $w$, because $w$ is unmatched in~$M$.
	\end{inparaenum} In all cases, we arrived to a contradiction.\qedhere
\end{proof}

\begin{claim}
\label{cl:sust}
If there is a stable matching $S$, then the set of engagements $M$ computed in line~\ref{li:super:engagement_graph} is a matching.
\end{claim}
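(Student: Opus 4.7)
My plan is to proceed by contradiction: assume that $S$ is a stable matching and that some man $u$ has two engagements $a_1, a_2$ in $M$. Since $S$ is a matching, without loss of generality $S(u) \neq a_2$. The first step is to extract structural information from the edge $u a_2 \in M \setminus S$. Since $a_2$ is maximal in $u$'s remaining poset (being an engagement) and $S(u)$ (if non-empty) lies in this poset by Claim~\ref{cl:deleted_edge}, the $u$-condition of the blocking definition for $u a_2$ holds automatically; stability of $S$ therefore forces $S(a_2) \prec_{a_2} u$ strictly. But $u$ is $a_2$'s current engagement, hence strictly preferred by $a_2$ over every proposer, so $S(a_2)$ never proposed to $a_2$. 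Consequently $a_2$ is not maximal in $S(a_2)$'s remaining poset, and $S(a_2)$ has an engagement $w_1 \prec_{S(a_2)} a_2$.

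Iterating this reasoning, I would construct an alternating chain $w_0 = a_2$, $u_1 = S(w_0)$, $w_1$, $u_2 = S(w_1)$, $w_2, \ldots$ in which each $w_{i+1}$ is an engagement of $u_{i+1}$ strictly preferred to $w_i$. At every step the edge $u_i w_i \in M \setminus S$ must fail to block $S$; the $u_i$-condition is automatic, so the $w_i$-condition fails and $u_{i+1} \prec_{w_i} u_i$ strictly, ensuring $u_{i+1}$ did not propose to $w_i$. The chain mirrors the one of Claim~\ref{cl:sucover} but begins at a woman $a_2$ who \emph{is} matched in $M$, to $u$, so closure at $w_0$ is a priori possible. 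Claims~\ref{cl:deleted_edge} and~\ref{cl:unmatched_woman} are used throughout to keep $S$-edges alive in the algorithm and to guarantee that the required $S$-partners in the chain actually exist.

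The hard part will be ruling out that closure in a finite graph. The plan is to show that any first repetition $w_k = w_l$ or $u_k = u_l$ (with $k < l$) propagates backward via the uniqueness of women's engagements and of men's $S$-partners, eventually pinning $u_j = u$ for some $j$, with $S(u) = w_{j-1}$ and $w_j = a_2$. If $S(u) = \emptyset$, no such $w_{j-1}$ exists, so the chain cannot close and is infinite in a finite graph, a contradiction. If $S(u) = a_1$, then $w_{j-1} = a_1$, whose engagement is $u$, forcing $u_{j-1} = u = u_j$ and contradicting the adjacent-distinctness $u_{j-1} \neq u_j$, which follows from $S(u_{j-1}) = w_{j-2} \neq w_{j-1}$. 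If $S(u) \notin \{a_1, a_2\}$, I would build the symmetric chain from $a_1$; it must also close with $u$ appearing, with identical $S(u)$ and identical engagement of $S(u)$. A backward cascade along both chains then equates their segments preceding $u$ and eventually places $a_1$ or $a_2$ at a position in the opposite chain where its engagement would have to be $u$; since $u$ is unique within either chain, the resulting position collision delivers the final contradiction.
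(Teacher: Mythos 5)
Your proposal is correct, and its engine is identical to the paper's: starting from an engagement $ua_2 \in M \setminus S$, you build the same alternating chain $w_0 = a_2,\ u_1 = S(w_0),\ w_1,\ u_2 = S(w_1), \dots$ from the same three observations (an engagement automatically satisfies the man-side blocking condition against $S$ because engagements are maximal in the remaining poset and stable edges are never deleted; stability then forces $S(w_i) \prec_{w_i} u_i$; asymmetry on the women's side plus the acceptance rule imply $S(w_i)$ never proposed to $w_i$, hence holds an engagement strictly below $w_i$). Where you differ is the endgame, and your version is in fact tighter than the paper's. The paper closes one chain at $u$, notes $uS(u) \notin M$, starts a second chain from another engagement and asserts it closes at $u$ ``via another edge from $S$''; as written this presupposes the two chains are internally disjoint, which is not argued --- a priori both close through the very same edge $uS(u)$. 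You accept that both chains end at the same $S$-edge and instead extract the contradiction from the backward propagation itself: the first repetition pins $u_j = u$, $S(u) = w_{j-1}$, $w_j = a_2$, and the case split on $S(u)$ being $\emptyset$, being $a_1$, or being neither finishes respectively by non-closure of the chain, by the adjacent-distinctness $u_{j-1} \neq u_j$, and by cascading the two chains backward from their shared endpoint $S(u)$ until either $a_1 = a_2$ or an earlier repeat appears. In a full write-up you should make explicit that the first repeated vertex must be the woman $w_0$ itself (a repeated man, or a repeated woman other than $w_0$, propagates to a strictly earlier repeat via the uniqueness of $S$-partners and of engagements) and that $j \geq 2$ in the $S(u) = a_1$ case so that $w_{j-2}$ exists (immediate from $w_0 = a_2 \neq a_1$); but these are routine, and the plan goes through.
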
 
\begin{proof}
As already mentioned, the only reason for $M$ not being a matching is that a man $u$ has more than one edges in~$M$. Since $S$ is a matching, not all of these are in~$S$. Let us denote an arbitrary edge of $u$ in $M \setminus S$ by~$uw$. $uw$ is an engagement and no stable edges are deleted, therefore $M(u)$ (either a woman or $\emptyset$) is not preferred to $w$. Thus, from the stability of~$S$, $w$ must have a strictly preferred edge in~$S$. Moreover, we also know that $u_1 = S(w)$ has never proposed to~$w$, otherwise $uw$ could not be in~$M$. So there exists a maximal woman $w_1 \in M(u_1)$ such that, $w_1 \prec_{u_1} w$. 

Due to analogous arguments, this preference path must continue. Since the graph has a finite number of edges, it must return to a vertex already visited. This recurring vertex cannot be in $U\setminus u$, because no vertex in $U$ has more than one edge in $S$ and similarly, it cannot be a vertex in $W$, because no woman has more than one edge in $M$. The only option therefore is that the cycle closes at~$u$. In this case, $uS(u) \notin M$, thus $M$ must have another edge in $M \setminus S$, because there are at least two edges in $M$ incident to~$u$. Repeating the same deductions, we arrive to another augmenting path that ends in a cycle at $u$ via another edge from~$S$. This contradicts the fact that $S$ is a matching.\qedhere
\end{proof} 
 

\textbf{Analysis and time complexity of Algorithm~\ref{alg:super_stable_poset_assymetric}}
We use a similar data structure to the one applied in the analysis of Algorithms~\ref{alg:strong_SMTAs} and~\ref{alg:reject}. The difference emerges from the poset preference structure on one side. We store the entire partial order for each man, 
given as a Hasse-diagram of the underlying directed acyclic graph of the poset, equipped with a dummy woman, from whom there is a directed edge to all initially maximal women. The cost of the execution is again grasped by the number of accesses to these data structures.

Since relations can be empty as well, the size of the input is analogously lower bounded by $\Om{n+m}$. The assumption of Hasse-diagrams allows a straightforward check whether all maximal women have been proposed to. The initial maximal set is the women directly connected to the dummy woman. Each time a woman $w$ turns down a proposal, the candidates of being promoted to maximal state are the women directly connected to $w$ in the Hasse-diagram. Therefore the cost of submitting proposals does not exceed $\Ord{m}$. The rest of the while loop, from lines~\ref{li:super:previous_proposes} to~\ref{li:super:previous_engagement_break}, concerns the asymmetric relations on the woman's side. One needs to iterate through the relations belonging to the woman in question and check whether the new proposal is strictly preferred to all previous proposals, and whether the previous engager is strictly preferred to the new one. This operation primitive has cost $\xi = \Ord{n^2}$. It is also remarked that, although we ``delete'' rejected proposal edges, in reality they could simply be marked as rejected. Then, checking previous proposals is meaningful again. Last but not least, the computation of $M$ and the examination of the output condition can be done in $\Ord{m}$ time, because engagements are marked anyway. Consequently, the time complexity of the algorithm is $\Ord{m \cdot \xi} + \Ord{m} = \Ord{n^2m}$. 

\thststnp*

\begin{proof}
The $\NP$-complete problem we reduce to our problem is again \textsc{(2,2)-e3-sat}~\cite{BKS03}. Our construction follows the same logic as the one in the proof of Theorem~\ref{th:weak_np}, however, the preferences are set differently, see Figure~\ref{fi:np_super}.

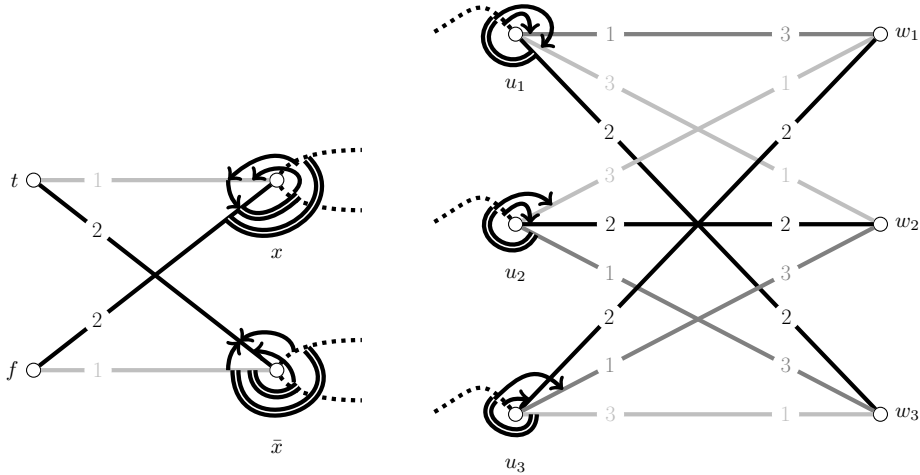
\begin{figure}[t]
\begin{minipage}[h]{0.3\textwidth}
$$\begin{array}{ll}
   	t: & \text{strict list: }  x \prec \bar{x}\\
    f: & \text{strict list: }  \bar{x} \prec x\\
    \hline
	x: & f \prec t, t \prec u, u \sim f\\
	\bar{x}: & t \prec f, t \prec u, u \sim f
\end{array}
$$
\end{minipage}
\begin{minipage}[h]{0.55\textwidth}
$$\begin{array}{ll}
	u_1: & w_1\prec x, w_2\prec x, x\sim w_3; \text{ strict list: } w_1\prec w_3\prec w_2\\
	u_2: & w_1\prec x, w_2\prec x, x\sim w_3; \text{ strict list: } w_3\prec w_2\prec w_1\\
	u_3: & w_1\prec x, w_2\prec x, x\sim w_3; \text{ strict list: } w_2\prec w_1\prec w_3\\
    \hline
	w_1: & \text{strict list: } u_2 \prec u_3 \prec u_1\\
	w_2: & \text{strict list: } u_1 \prec u_2 \prec u_3\\
	w_3: & \text{strict list: } u_3 \prec u_1 \prec u_2
\end{array}$$
\end{minipage}

\begin{tikzpicture}[scale=0.8, transform shape]

\tikzstyle{vertex} = [circle, draw=black, scale=0.7]
\tikzstyle{specvertex} = [circle, draw=black, scale=0.7]
\tikzstyle{edgelabel} = [circle, fill=white, scale=0.5, font=\huge]
\pgfmathsetmacro{\d}{1.05}
\pgfmathsetmacro{\b}{2}

\node[vertex, label={[label distance=9mm]below:$x$}] (x) at (0, \d*3) {};
\node[vertex, label={[label distance=9mm]below:$\bar{x}$}] (ox) at (0, 0) {};
\node[vertex, label=left:$t$] (t) at (-\b*2, \d*3) {};
\node[vertex, label=left:$f$] (f) at (-\b*2, 0) {};

\draw [ultra thick, gray!50] (x) -- node[edgelabel, near end] {1} (t);
\draw [ultra thick, gray!50] (ox) -- node[edgelabel, near end] {1} (f);
\draw [ultra thick] (ox) -- node[edgelabel, near end] {2} (t);
\draw [ultra thick] (x) -- node[edgelabel, near end] {2} (f);;

\foreach \from in {x, ox}{
   \draw [ultra thick, dotted] (\from) to[out=60,in=180, distance=0.4cm ] ($ (\from) + (4/3*\d,1/4*\b) $);
   \draw [ultra thick, dotted] (\from) to[out=-60,in=180, distance=0.4cm ] ($ (\from) + (4/3*\d,-1/4*\b) $);
    }
    
\path [->,draw,ultra thick] ($ (x) !.20! (t) $) to [out=270,in=135] ($ (x) !.15! (f) $);
\path [->,draw,ultra thick] ($ (ox) !.20! (f) $) to [out=90,in=-135] ($ (ox) !.15! (t) $);
\path [->,draw,ultra thick] ( $(x) !.20! ($ (x) + (4/3*\d,-2/4*\b) $)$) to [out=45,in=45, distance=5mm] ($ (x) !.1! (t) $);
\path [->,draw,ultra thick] ($ (x) !.25! ($ (x) + (4/3*\d,2/4*\b) $)$) to [out=135,in=45] ($ (x) !.20! (t) $);
\path [double,draw,ultra thick] ($ (x) !.15! (f) $) to [out=-45,in=-135] ($ (x) !.25! ($ (x) + (4/3*\d,-2/4*\b) $)$);
\path [double,draw,ultra thick] ($ (x) !.2! (f) $) to [out=-45,in=-45, distance=10mm] ($ (x) !.35! ($ (x) + (4/3*\d,2/4*\b) $)$);

\path [->,draw,ultra thick] ($ (ox) !.20! ($ (ox) + (4/3*\d,-2/4*\b) $)$) to [out=90,in=0] ($ (ox) !.1! (t) $);
\path [->,draw,ultra thick] ($ (ox) !.20! ($ (ox) + (4/3*\d,3/4*\b) $)$) to [out=90,in=45] ($ (ox) !.15! (t) $);
\path [double,draw,ultra thick] ($ (ox) !.1! (f) $) to [out=-90,in=-135, distance=4mm] ($ (ox) !.25! ($ (ox) + (4/3*\d,-2/4*\b) $)$);
\path [double,draw,ultra thick] ($ (ox) !.17! (f) $) to [out=-90,in=-45, distance=13mm] ($ (ox) !.35! ($ (ox) + (4/3*\d,2/4*\b) $)$);

\node[fill=white] (u1) at (0, -\d*1.6) {};

\end{tikzpicture}\hspace{8mm}\begin{tikzpicture}[scale=0.8, transform shape]

\tikzstyle{vertex} = [circle, draw=black, scale=0.7]
\tikzstyle{specvertex} = [circle, draw=black, scale=0.7]
\tikzstyle{edgelabel} = [circle, fill=white, scale=0.5, font=\huge]
\pgfmathsetmacro{\d}{1.05}
\pgfmathsetmacro{\b}{2}
\node[vertex, label={[label distance=5mm]below:$u_1$}] (u1) at (0, -\d*3) {};
\node[vertex, label={[label distance=5mm]below:$u_2$}] (u2) at (0, -\d*6) {};
\node[vertex, label={[label distance=5mm]below:$u_3$}] (u3) at (0, -\d*9) {};

\node[vertex, label=right:$w_1$] (w1) at (\b*3, -\d*3) {};
\node[vertex, label=right:$w_2$] (w2) at (\b*3, -\d*6) {};
\node[vertex, label=right:$w_3$] (w3) at (\b*3, -\d*9) {};

\draw [ultra thick, gray] (u1) -- node[edgelabel, near start] {1} node[edgelabel, near end] {3} (w1);
\draw [ultra thick, gray!50] (u1) -- node[edgelabel, near start] {3} node[edgelabel, near end] {1} (w2);
\draw [ultra thick] (u1) -- node[edgelabel, near start] {2} node[edgelabel, near end] {2} (w3);

\draw [ultra thick, , gray!50] (u2) -- node[edgelabel, near start] {3} node[edgelabel, near end] {1} (w1);
\draw [ultra thick] (u2) -- node[edgelabel, near start] {2} node[edgelabel, near end] {2} (w2);
\draw [ultra thick, gray] (u2) -- node[edgelabel, near start] {1} node[edgelabel, near end] {3} (w3);

\draw [ultra thick] (u3) -- node[edgelabel, near start] {2} node[edgelabel, near end] {2} (w1);
\draw [ultra thick, gray] (u3) -- node[edgelabel, near start] {1} node[edgelabel, near end] {3} (w2);
\draw [ultra thick, gray!50] (u3) -- node[edgelabel, near start] {3} node[edgelabel, near end] {1} (w3);

\path [->,draw,ultra thick] ($ (u1) !.15! ($ (u1) + (-4/3*\d,2/4*\b) $)$) to [out=45,in=90, distance=4mm] ($ (u1) !.04! (w1) $);
\path [->,draw,ultra thick] ($ (u1) !.25! ($ (u1) + (-4/3*\d,2/4*\b) $)$) to [out=45,in=45, distance=7mm] ($ (u1) !.07! (w2) $);
\path [->,draw,ultra thick] ($ (u2) !.25! ($ (u2) + (-4/3*\d,2/4*\b) $)$) to [out=45,in=135, distance=4mm] ($ (u2) !.1! (w1) $);
\path [->,draw,ultra thick] ($ (u2) !.15! ($ (u2) + (-4/3*\d,2/4*\b) $)$) to [out=45,in=90, distance=4mm] ($ (u2) !.04! (w2) $);
\path [->,draw,ultra thick] ($ (u3) !.15! ($ (u3) + (-4/3*\d,2/4*\b) $)$) to [out=45,in=180, distance=2mm] ($ (u3) !.04! (w1) $);
\path [->,draw,ultra thick] ($ (u3) !.25! ($ (u3) + (-4/3*\d,2/4*\b) $)$) to [out=45,in=135, distance=6mm] ($ (u3) !.13! (w2) $);

\path [double,draw,ultra thick] ($ (u1) !.05! (w3) $) to [out=-135,in=-135, distance=6mm] ($ (u1) !.25! ($ (u1) + (-4/3*\d,2/4*\b) $)$);
\path [double,draw,ultra thick] ($ (u2) !.05! (w3) $) to [out=-117,in=-135, distance=6mm] ($ (u2) !.25! ($ (u2) + (-4/3*\d,2/4*\b) $)$);
\path [double,draw,ultra thick] ($ (u3) !.05! (w3) $) to [out=-90,in=-135, distance=6mm] ($ (u3) !.25! ($ (u3) + (-4/3*\d,2/4*\b) $)$);

\foreach \from in {u1, u2, u3}
   \draw [ultra thick, dotted] (\from) to[out=130,in=30, distance=0.8cm ] ($ (\from) + (-4/3*\d,0) $);
\end{tikzpicture}
\caption{A variable gadget to the left and a clause gadget to the right. Interconnecting edges are dotted. The arrows point to the preferred edge, while double lines denote incomparability.}
\label{fi:np_super}
\end{figure}

\begin{claim}
	If there is a truth assignment to $B$, then there is a super stable matching in~$G$.
\end{claim}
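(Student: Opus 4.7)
The plan is to imitate the construction from the proof of Theorem~\ref{th:weak_np}, but with a more careful choice of matching inside each clause gadget so that the stronger super-blocking definition cannot be triggered. First, I would set $M$ inside each variable gadget according to the truth assignment: include $\{tx, f\bar{x}\}$ if $x$ is TRUE, and $\{fx, t\bar{x}\}$ if $x$ is FALSE. For each clause, pick any satisfied literal, let $u_i$ be the corresponding vertex on the clause side, and match the internal $K_{3,3}$ of the clause gadget using the unique stable matching that pairs $u_i$ with $w_3$. A direct check on the three symmetric preference patterns shows that $\{u_1w_1, u_2w_3, u_3w_2\}$, $\{u_1w_3, u_2w_2, u_3w_1\}$, and $\{u_1w_2, u_2w_1, u_3w_3\}$ are all stable inside the clause gadget (they correspond to the men-optimal, middle, and women-optimal stable matchings), so an appropriate choice is available for every value of $i$.

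Next, I would rule out super-blocking edges by a short case analysis. Inside a variable gadget, $t$ and $f$ have strict two-element lists and in both possible gadget matchings they are paired with the partner they strictly prefer; hence the super-blocking condition fails at $t$ or $f$ for every non-matching internal edge. Inside a clause gadget, the preferences of $u_i$'s restricted to $\{w_1, w_2, w_3\}$ and the full preferences of the $w_j$'s are strict, so super-stability coincides with stability there and the chosen matching is stable.

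The only remaining case, and also the main obstacle, is the interconnecting edges; this is exactly where the satisfying assignment is used. For a TRUE literal whose interconnecting edge reaches a variable-side vertex $v \in \{x, \bar{x}\}$, $M$ pairs $v$ with $t$, and since $t \prec_v u$ holds at both $x$ and $\bar{x}$, the variable side strictly prefers its partner to the interconnecting edge and the super-blocking condition fails at $v$. For a FALSE literal, $M$ pairs $v$ with $f$; the interconnecting edge is incomparable to the matching edge at $v$ because $u \sim_v f$, so the super-blocking condition holds at the variable side and must be ruled out at the clause side. By construction, the only vertex on the $u$-side that is paired with $w_3$ corresponds to the chosen true literal, hence every false literal's $u_j$ is matched with $w_1$ or $w_2$; since $w_1 \prec_{u_j} x$ and $w_2 \prec_{u_j} x$, the vertex $u_j$ strictly prefers its partner to the interconnecting edge, killing the super-blocking condition there as well. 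Combining the three cases shows that $M$ is super stable.
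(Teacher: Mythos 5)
Your construction is identical to the paper's (same assignment of $\{tx,f\bar{x}\}$ vs.\ $\{fx,t\bar{x}\}$, same choice of the clause-gadget stable matching pairing the chosen true literal's vertex with $w_3$), and your treatment of the clause gadgets and of the interconnecting edges is correct and matches the paper's reasoning. However, there is one genuinely false step in your analysis of the internal variable-gadget edges. You claim that ``in both possible gadget matchings [$t$ and $f$] are paired with the partner they strictly prefer.'' This holds only for the TRUE matching $\{tx,f\bar{x}\}$. In the FALSE matching $\{fx,t\bar{x}\}$, $t$ is matched to $\bar{x}$ even though $t$'s list is $x\prec\bar{x}$ (so $t$ strictly prefers $x$), and $f$ is matched to $x$ even though $f$ strictly prefers $\bar{x}$. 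Hence for the non-matching edges $tx$ and $f\bar{x}$ the super-blocking condition is \emph{satisfied} at $t$ and at $f$, and your stated reason does not rule them out.

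The conclusion is still true, but it must be argued at the other endpoints: in the FALSE case, $x$ has $f\prec_x t$, so $x$ strictly prefers its partner $f$ to $t$ and the condition fails at $x$ for the edge $tx$; likewise $\bar{x}$ has $t\prec_{\bar{x}} f$, so $\bar{x}$ strictly prefers its partner $t$ to $f$ and the condition fails at $\bar{x}$ for the edge $f\bar{x}$. With this local repair (check $t,f$ in the TRUE case and $x,\bar{x}$ in the FALSE case), your case analysis is complete and the proof goes through exactly as in the paper.
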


In each variable gadget belonging to a true variable, $\left\{tx,f\bar{x} \right\}$ is chosen, whereas all gadgets corresponding to a false variable contribute matching $\left\{fx,t\bar{x} \right\}$. In each clause, there is at least one true literal.  The vertex representing the appearance of this literal is matched to $w_3$ in the clause gadget, while the remaining four vertices are coupled up in such a way that no edge inside of the gadget blocks. This is possible, because $\left\{u_1w_3, u_2w_2, u_3w_1\right\}$, $\left\{u_1w_1, u_2w_3, u_3w_2\right\}$, and $\left\{u_1w_2, u_2w_1, u_3w_3\right\}$ are all stable matchings. The reason why the literal satisfying the clause was chosen to be matched to $w_3$ is that its the matching edge in the variable gadget is strictly preferred to its interconnecting edge, and thus it does not block~$M$. Due to the strict preferences inside gadgets, it is easy to check that no other edge blocks the constructed matching.

\begin{claim}
	If there is a super stable matching $M$ in $G$, then there is a truth assignment to~$B$.
\end{claim}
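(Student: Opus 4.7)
The plan mirrors the argument given for the weak-stability case: first force each variable gadget into one of two internal matchings (from which the truth assignment is read off), then rule out interconnecting edges from $M$, then show the restriction of $M$ to each clause gadget is a perfect matching on $K_{3,3}$, and finally extract a blocking edge whenever some clause has no satisfied literal.

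I would begin by showing that $t$ and $f$ must be matched in every super stable $M$. If $t$ were unmatched, the edge $t\bar{x}$ blocks: at $\bar{x}$ the preferences $t \prec f$ and $t \prec u$ put $t$ above every other neighbor, so no choice of $M(\bar{x})$ can strictly beat $t$. If $f$ were unmatched, $fx$ blocks: at $x$ the preference $f \prec t$ puts $f$ above $t$, while $u \sim f$ prevents $u$ from strictly beating $f$ either. Because $t$ and $f$ each have neighbors only among $\{x, \bar{x}\}$, the restriction of $M$ to a variable gadget equals either $\{tx, f\bar{x}\}$ or $\{t\bar{x}, fx\}$, and no interconnecting edge incident to $x$ or $\bar{x}$ lies in $M$. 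I set variable $x$ to \emph{true} if $\{tx, f\bar{x}\} \subseteq M$ and to \emph{false} otherwise; for any literal-vertex $l$ in a clause, this makes the corresponding literal true iff $M(l) = t$ and false iff $M(l) = f$.

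Next I show $M$ induces a perfect matching on each clause gadget $\{u_1, u_2, u_3\} \cup \{w_1, w_2, w_3\}$. Since no interconnecting edge belongs to $M$, it suffices to show every $w_j$ is matched. A direct inspection of the strict lists of the $u_i$'s reveals that each $w_j$ is the unique top choice of exactly one $u_k$: namely $w_1$ of $u_1$, $w_2$ of $u_3$, and $w_3$ of $u_2$. If $w_j$ were unmatched, then $u_k w_j$ blocks $M$: the condition at $w_j$ holds trivially, and at $u_k$ the only possible values of $M(u_k)$ are $\emptyset$ or one of the two $w$-vertices strictly below $w_j$ in $u_k$'s strict list, so $M(u_k) \prec_{u_k} w_j$ or $M(u_k) = \emptyset$, establishing the super-stability blocking condition at $u_k$. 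Hence every $w_j$, and by cardinality every $u_i$, is covered by $M$ within the gadget.

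Finally, assume for contradiction that some clause has $M(l_i) = f$ for every $i \in \{1, 2, 3\}$. Let $j$ be the unique index with $M(u_j) = w_3$. The interconnecting edge $u_j l_j$ then blocks $M$: at $l_j$ the variable-gadget preferences give $u_j \sim_{l_j} f = M(l_j)$ (the indifference $u \sim f$), and at $u_j$ the clause-gadget preferences give $l_j \sim_{u_j} w_3 = M(u_j)$. This contradicts super-stability of $M$, so every clause contains at least one true literal, and the assignment satisfies $B$. The main subtlety is the careful handling of the acyclic but non-transitive preferences at the $x$- and $\bar{x}$-vertices: each blocking-edge verification must combine strict and indifference comparisons, and the construction is engineered so that the built-in indifference $u \sim f$ is precisely what makes the final interconnecting edge block, a step that would fail under strict preferences.
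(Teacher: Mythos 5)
Your proof is correct and follows essentially the same route as the paper: force $t$ and $f$ to be matched so each variable gadget is internally matched in one of two ways, exclude interconnecting edges, show the clause gadget is perfectly matched, and then observe that the interconnecting edge at the unique $u_j$ with $M(u_j)=w_3$ is incomparable to the matching edge on the clause side and hence must be strictly dominated (by $t$) on the variable side. Your write-up is somewhat more detailed than the paper's (in particular, the explicit top-choice argument showing each $w_j$ is covered), but the underlying argument is identical.
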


If either $t$ or $f$ is unmatched in $M$, then at least one of their $x$ and $\bar{x}$ vertices is either unmatched or it is matched along an interconnecting edge. In both cases, this vertex has a blocking edge leading to the unmatched $t$ or $f$. With this we have already shown three statements: \begin{inparaenum} 
    \item for each variable gadget, either $\left\{tx,f\bar{x} \right\} \in M$ or $\left\{fx,t\bar{x} \right\} \in M$;
    \item no interconnecting edge is in~$M$;
    \item $M$ is perfect in each clause gadget.
\end{inparaenum} In each clause gadget, exactly two $u$-vertices are matched to partners strictly preferred to their interconnecting edge. Therefore, each clause gadget has exactly one interconnecting edge that is incomparable to the edge in $M$ at the clause gadget. In order to ensure stability, this edge must be dominated by $M$ at its variable gadget. This only happens if the corresponding literal is satisfied in the truth assignment. With this we have proved that each clause is satisfied.
\end{proof}

\end{document}